\newtheorem*{theorem*}{Theorem}
\begin{document}

\graphicspath{{Graphics/}}

\def\({\left(}
\def\){\right)}
\def\av#1{\left\langle #1\right\rangle}
\def\ad{a^\dagger}
\def\a2d{a^{\dagger 2}}
\def\b2d{b^{\dagger 2}}
\def\d#1{#1^\dagger}
\def\bydef{\stackrel{\wedge}=}
\newcommand{\eq}[1]{Eq.~(\ref{eq:#1})}
\def\Eq#1{Equation~(\ref{eq:#1})}
\def\eqs#1#2{Eqs.~(\ref{eq:#1}) \& (\ref{eq:#2})}
\def\eqlist#1#2{Eqs.~(\ref{eq:#1}-\ref{eq:#2})}
\def\Eqs#1#2{Equations~(\ref{eq:#1}) \& (\ref{eq:#2})}
\def\Eqlist#1#2{Equations~(\ref{eq:#1}-\ref{eq:#2})}
\def\fig#1{Fig.\ref{fig:#1}}
\def\figs#1#2{Figs.\ref{fig:#1} \& \ref{fig:#2}}
\def\Fig#1{Figure~\ref{fig:#1}}
\def\Figs#1#2{Figures~\ref{fig:#1} \& \ref{fig:#2}}
\newcommand\tab[1]{Table~\ref{tab:#1}}
\newcommand\sect[1]{Section~\ref{sec:#1}}
\newcommand\bra[1]{\left\langle\,#1\,\right|} 
\newcommand\ket[1]{\left|\,#1\,\right\rangle}
\newcommand\scalprod[2]{\left\langle\,#1\,\right|\left.#2\,\right\rangle}
\newcommand\om[1]{\omega_{\scriptscriptstyle #1}}
\newcommand\tr[1]{{\rm Tr}[\mathbf #1]}

\title{Are all Gaussian states also cluster states? \\
	Essential diagnostic tools for continuous-variable one-way quantum computing}

\author{Carlos Gonz\'alez-Arciniegas}
\email{cag2ze@virginia.edu}
\affiliation{Department of Physics, University of Virginia, 382 McCormick Rd, Charlottesville, VA 22904-4714, USA}
\author{Paulo Nussenzveig}
\author{Marcelo Martinelli}
\affiliation{Instituto de F\'{\i}sica, Universidade de S\~ao Paulo, 05315-970 S\~ao Paulo, SP-Brazil} 
\author{Olivier Pfister}
\affiliation{Department of Physics, University of Virginia, 382 McCormick Rd, Charlottesville, VA 22904-4714, USA}

\begin{abstract} 
Continuous-variable (CV) cluster states are a universal quantum computing platform that has experimentally out-scaled qubit platforms by orders of magnitude. Room-temperature implementation of CV cluster states has been achieved with quantum optics by using multimode squeezed Gaussian states. It has also been proven that fault tolerance thresholds for CV quantum computing can be reached at realistic squeezing levels. In this paper, we show that standard approaches to design and characterize CV  cluster states can miss entanglement present in the system. Such hidden entanglement may be used to increase the power of a quantum computer but it can also, if undetected, hinder the successful implementation of a quantum algorithm. By a detailed analysis of the structure of Gaussian states, we derive an algorithm that reveals hidden entanglement in an arbitrary Gaussian state and optimizes its use for one-way quantum computing.

 \end{abstract}

\maketitle

\section{Introduction}

Continuous-variable (CV) quantum information \cite{Braunstein2005a,Weedbrook2012,Pfister2019} has achieved groundbreaking scalability performance~\cite{Pysher2011,Chen2014,Yokoyama2013,Yoshikawa2016,Asavanant2019,Larsen2019} in the universal, measurement-based, one-way quantum computing (QC) model~\cite{Menicucci2006}. The CVQC model uses Gaussian (in terms of their Wigner function) cluster states as mathematical substrates~\cite{Zhang2006} together with required non-Gaussian resources (such as photon counting measurement or a cubic phase gate) to constitute a universal quantum computation platform~\cite{Lloyd1999,Gottesman2001,Bartlett2002,Menicucci2006,Mari2012}. The idealized CVQC model employs, in lieu of qubits,  spectrally dense qumodes such as the respective eigenstates $\{\ket{s}_q\}_{s\in\mathbb R}$ and $\{\ket{s}_p\}_{s\in\mathbb R}$  of the amplitude-quadrature operator $Q=(a+\ad)/\sqrt2$ and phase-quadrature operator $P=i(\ad-a)/\sqrt2$ of the quantized electromagnetic field, $a$ being the photon annihilation operator.  These quadrature eigenstates are infinitely squeezed states, which require infinite energy and are therefore unphysical. Realistic CVQC employs qumodes in squeezed Gaussian states, generated by SU(1,1) quadratic Hamiltonians. Such states are arbitrarily good approximations to quadrature eigenstates and also allow a fault tolerance threshold for amounts of squeezing of 10-20 dB~\cite{Menicucci2014ft,Fukui2018,Walshe2019} that are experimentally reachable~\cite{Vahlbruch2016}. 

Such reasonably high fault-tolerant squeezing levels may lead to the false impression that the analysis of the corresponding states would be essentially equivalent to their analysis in infinite squeezing limit. This is incorrect. We show in this paper that the infinite squeezing limit fails to capture a subtle but crucial property of Gaussian states, which we call hidden entanglement. We show that hidden entanglement can disrupt quantum computing if unaccounted for in a CV cluster state. However, hidden entanglement can always be detected and can sometimes be corrected.

In order to frame the problem in the most general way, we use the graphical calculus formalism developed by Menicucci, Flammia, and van Loock~\cite{Menicucci2011}, whose gist is that any pure multimode Gaussian state can be described by a unique graph whose vertices denote the qumodes and whose complex-weighted edges denote the interactions between the qumodes sharing an edge. In this formalism, the real parts of the edge weights denote controlled-phase interactions, which are the CV analogs of  controlled-Z gates for qubits.\footnote{Recall that controlled-Z gates define the edges of qubit-vertex graph states, a.k.a.\ cluster states~\cite{Briegel2001}, which enable measurement-based  one-way quantum computing~\cite{Raussendorf2001}.} These real edges therefore define the graph of a CV cluster state~\cite{Zhang2006} which is usable for  measurement-based, universal quantum computing~\cite{Menicucci2006,Gu2009}.\footnote{In contrast, the graph defined by the imaginary edges cannot be used for measurement-based quantum computing as that would require measuring non-Hermitian observables.}

It had been assumed to this day that, if Gaussian local unitaries (GLUs,\footnote{GLUs are operations generated by all single-mode phase-space rotations (i.e., optical phase shifts), single-mode squeezes, and single-mode shears (which are combinations of the former two). GLUs are equivalent to single-mode symplectic operations because Sp(2,$\mathbb R$) $\sim$ SU(1,1) $\supset$ U(1).} which cannot change the entanglement of the state) are applied to make the imaginary edge weights of the graph vanishingly small (e.g.\ in the limit of infinite squeezing), then a valid cluster state is obtained.

In this paper, we show that the above procedure is, in fact, not unique: starting from a given Gaussian state, there are different choices of GLUs that all give a vanishingly small imaginary graph but do give real graphs that have dramatically different entanglement, for example, containing or not disconnected subgraphs, i.e., seemingly separable quantum states. If the vanishingly small imaginary graph is ignored, this creates the paradoxical and unacceptable situation of the same state being GLU-equivalent both to separable states and to completely inseparable states.

This paradox is resolved by realizing that the entanglement that seems to disappear from the real graph under a GLU becomes, in fact, ``hidden'' by being transferred to the imaginary graph. We must therefore ensure that the GLU leads to an imaginary graph that has no edge between two different qumodes.

We report two mathematical results: {\em(i)},  we derive a sufficient mathematical criterion to find Gaussian states that have uncorrectable, i.e., ``irreducible,'' hidden entanglement, which means that not all their imaginary edges can be transferred to real ones under any GLU and that these states cannot be expressed as  cluster states; {\em(ii)}, we derive an analytic algorithm to express a Gaussian state into a valid CV cluster state, i.e., with {\em null} imaginary edge weights. This algorithm is applicable to any Gaussian state and succeeds for all valid CV cluster states. The whole situation is summarized in \fig{van}.
\begin{figure}[htb]
\vskip -.25in
\begin{center} 
\includegraphics[width=1\columnwidth]{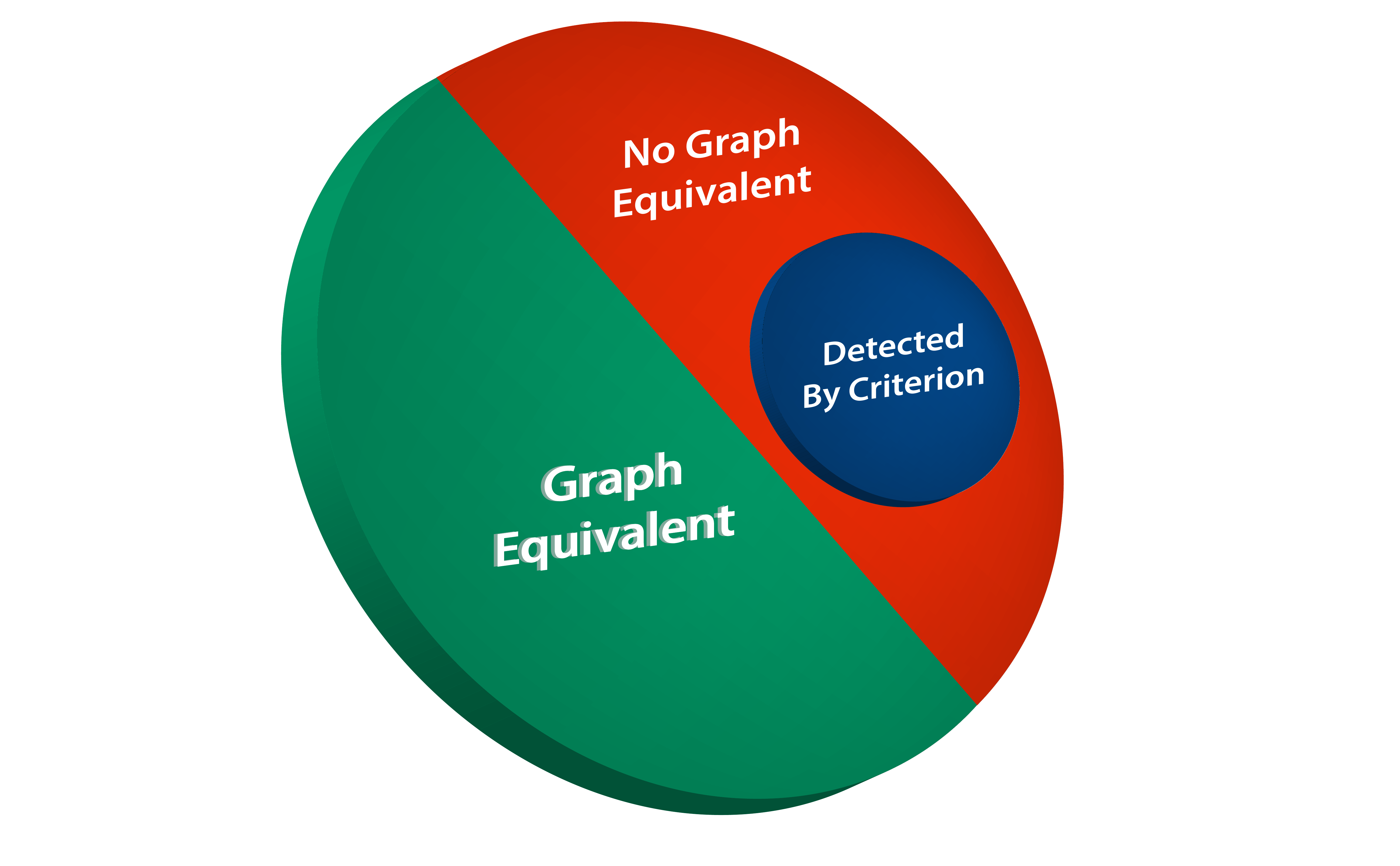}
\end{center}
\vskip -.25in
\caption{\em Venn diagram of all Gaussian states. The two subsets are states which can (green) or can't (red) be cast as a CV cluster state. Some of the invalid graph states are detectable by our sufficient criterion (blue subset). If applied to the whole set, our analytic procedure to express a Gaussian state as a graph state will succeed for the whole green subset.}
\label{fig:van}
\vskip -.25in
\end{figure} 

These results enable us to answer the long-standing question of whether every Gaussian state can be expressed as a cluster state: the answer is negative. 

The paper is structured as follows. In \sect{pb}, we pose the problem of identifying Gaussian states that cannot be expressed as valid cluster states. In \sect{crit}, we derive a sufficient criterion, {\em(i)} above, to identify Gaussian states that are not equivalent to cluster states. In \sect{diag}, we derive a general algorithm, {\em(ii)} above, for expressing a Gaussian state as a valid cluster state. In the final section we summarize our conclusions and discuss their relevance for quantum information science.

\section{The problem: are all Gaussian states valid CV cluster states?}\label{sec:pb}

\subsection{Introduction to the CV cluster state formalism}

\subsubsection{Qubits}

A cluster state~\cite{Briegel2001} $\ket{\!\psi_{V}\!}$ is a graph quantum state~\cite{Hein2004} that contains all the entanglement ever needed for any quantum algorithm~\cite{Raussendorf2001} and that must be sparsely connected in order to be useful for quantum computing~\cite{Bacon2009,Gross2009,Bremner2009}. In the qubit paradigm, a graph state is composed of qubit vertices in the $\ket+=(\ket0+\ket1)/\sqrt2$ state, linked by controlled-Z gate edges. Quantum computation proceeds from a cluster state solely by single-qubit measurements and feedforward to graph neighbors. An $N$-qubit graph state can be defined as a stabilizer state, i.e., a simultaneous eigenstate of the $N$ generators of the stabilizer group with eigenvalue 1. These generators are defined as
\begin{align}
	K_j&=\hat{\sigma}_x^{(j)}\prod_{k=1}^N\left(\hat{\sigma}_z^{(k)}\right)^{{V}_{jk}},\quad j=1,2,...,N,\\
	{K}_j&\ket{\!\psi_{V}\!}=\ket{\!\psi_{V}\!},\qquad j=1,2,...n,
\end{align}
where the ${V}_{jk}$ denote the elements of the adjacency matrix $\mathbf{V}$ of the graph: ${V}_{jk}$=1 if there is an edge between qubits $j$ and $k$ and ${V}_{jk}$=0 otherwise. The operators $\hat{\sigma}^{(j)}_{x,z}$  are Pauli operators acting on qubit $j$. 

\subsubsection{Qumodes}

An ideal CV graph state $\ket{\!\Psi_{V}\!}$ is constructed by preparing each qumode vertex in a zero-phase eigenstate $\ket0_p=\int\ket{s}_q ds/\sqrt{2\pi}$ and by then applying controlled phase-displacements as per the adjacency matrix $\prod_{j\geq k}^N\exp(i\mathbf{V}_{jk}Q_jQ_k)$~\cite{Zhang2006,Menicucci2006}. The generators of the stabilizer group are of the form $K_j=e^{i \alpha \mathcal{N}_j}$,  $\alpha\in \mathbb{R}$, with $\mathcal{N}_j$ denoting the nullifier operators~\cite{Gu2009}
\begin{align}
	\mathcal{N}_j&=P_j-\sum_{k=1}^{N}{V}_{jk}Q_k,\quad j=1,2,...n,\\
	\mathcal{N}_j&\ket{\!\Psi_{V}\!}=0\ket{\!\Psi_{V}\!},\qquad j=1,2,...n,
\end{align}
where $\mathbf V$ is now a weighted adjacency matrix, whose  elements can have any real value. These ideal CV states are infinitely squeezed and therefore unphysical. 

\subsubsection{Finitely squeezed states} 

In the laboratory, one employs the closest approximations to quadrature eigenstates which are Gaussian phase-quadrature-squeezed vacuum states
\begin{align}
\ket{0,r} & 
\propto\int dq\,e^{-\frac12e^{-2r}q^{2}} \,\ket q 
\propto\int dp\,e^{-\frac12e^{2r}p^{2}} \,\ket p
\end{align}
where $r>0$ is the squeezing parameter and coincides with the logarithmic gain of a parametric amplifier such as an optical parametric oscillator (OPO) below threshold. A valid Gaussian approximation of an ideal graph state $|\Psi_{V}\rangle$ of adjacency matrix $\mathbf{V}$ must then fulfill the property 
\begin{align}\label{eq:cov0}
	\text{Cov}\left[\mathbf{P}-\mathbf{V}\mathbf{Q}\right]
	\xrightarrow[r\to\infty]{} \mathbb0
\end{align}
where Cov$[\mathbf{A}]_{jk}=\frac{1}{2}\left\langle\Psi_{V}\right| \{A_j,A_k\}\left| \Psi_{V} \right\rangle$ is an element of the covariance matrix of operator vector  $\mathbf{A}$=($A_1$,...,$A_n$)$^{T}$. 

It was shown by Meniccuci, Flammia, and van Loock~\cite{Menicucci2011} that the effects of finite squeezing can be fully taken into account by defining a complex graph $\mathbf{Z}=\mathbf{V}+i \mathbf{U}$ where $\mathbf{V}$ is the weighted symmetric adjacency matrix as before and $\mathbf{U}$ is a symmetric positive definite matrix that accounts for all finite squeezing effects.\footnote{Note that weighted real graphs also appear in qudit graph state theory but complex graphs are a property of CV systems.} Any pure Gaussian state $\ket{\Psi_{Z}}$\footnote{Without loss of generality, we only consider states with $\av{P_j}=\av{Q_j}=0$ since that doesn't affect the entanglement properties of the system}   can be written in the position representation as  
\begin{align}\label{eq:WaveFunct}
	\scalprod{\mathbf{q}}{\Psi_{Z}}=\Psi_{Z}(\mathbf{q})=\pi^{-N/4}\left(\det\mathbf{U}\right)^{1/4}\exp\left[\frac{i}{2}\mathbf{q}^\text{T}\mathbf{Z}\mathbf{q}\right],
\end{align}
 the positive definiteness of $\mathbf{U}$ ensuring that the state is normalizable. The complex matrix $\mathbf{Z}$ defines exact graph state nullifiers 
\begin{align}
	\left(\mathbf{P}-\mathbf{Z}\mathbf{Q}\right)\ket{\Psi_{Z}}=\mathbf0\ket{\Psi_{Z}}.
\end{align}
These nullifiers are non-Hermitian operators and therefore not suited for measurement-based quantum computation. Nonetheless we can still use the nullifiers given by the adjacency matrix $\mathbf{V}$ if we realize that the wave function of \eq{WaveFunct} satisfies the relation
\begin{align}\label{eq:CovNull}
\text{Cov}\left[\mathbf{P}-\mathbf{V}\mathbf{Q}\right]=\frac{1}{2}\mathbf{U},
\end{align}
and note that, as long as $\mathbf{U}$$\to$$\mathbb0$ or, equivalently, $\tr U$$\to$0 given that $\mathbf{U}$ is positive definite, we recover \eq{cov0}. It has been proposed that having a vanishingly small $\mathbf{U}$ was enough to claim that our Gaussian state approximates an ideal graph state with adjacency matrix $\mathbf{V}$ and hence, we could think of  $\mathbf{U}$ (or $\tr U$) as the error on approximating an ideal CV graph state. Note from \eqs{cov0}{CovNull} that $\mathbf U$=$\mathbb0$ is logically equivalent to the infinite squeezing limit. 
 
If $\tr{U}$ is vanishingly small, we may be tempted to completely disregard $\bf U$ and think of $\bf Z$  as the ideal graph $\mathbf{Z}=\mathbf{V}$. This has been, until now, the main way to deal with Gaussian graph states but, as we show next using the covariance matrix, this definition alone may neglect strong quantum correlations present in the system, which can lead to large errors in the characterization of its entanglement, as we show in \sect{ex}.

\subsubsection{Covariance matrix}

A Gaussian state is fully characterized by its covariance matrix. The quadrature-ordered covariance matrix is
\begin{align}
\mathbf{\Sigma}^{(quad)} = \text{Cov}[\mathbf R] 
&= \begin{pmatrix}
		\text{Cov}[\mathbf Q]&\text{Cov}[\mathbf Q,\mathbf P]\\
		\text{Cov}[\mathbf P,\mathbf Q]&\text{Cov}[\mathbf P]
	\end{pmatrix}\label{eq:cova}\\
&=\frac{1}{2}\begin{pmatrix}
		\mathbf{U}^{-1}&\mathbf{U}^{-1}\mathbf{V}\\
		\mathbf{V}\mathbf{U}^{-1}&\mathbf{U}+\mathbf{V}\mathbf{U}^{-1}\mathbf{V}
	\end{pmatrix},\label{eq:U-1}
\end{align}
where $\bf R$=$(\mathbf Q,\mathbf P)^{T}$. Examination of the upper left block in \eqs{cova}{U-1} yields the formula  
\begin{align}\label{eq:covQ}
	\text{Cov}\left[\mathbf{Q} \right]=\frac{1}{2}\mathbf{U}^{-1} 
\end{align}
and a first observation: {\em the off-diagonal entries of $\mathbf U^{-1}$} may be large, describing strong quantum correlations that could still appear vanishingly small in $\mathbf{U}$ and therefore be missed. We must therefore supplement the $\tr U$$\to$0  criterion [\eq{CovNull}] for finitely squeezed Gaussian states~\cite{Menicucci2011}  by also requiring that $\mathbf U^{-1}$ be diagonal, which is logically equivalent to $\mathbf U$ being diagonal~\cite{Zhu2021}. This will ensure that all the correlations of the system are encoded exclusively in the adjacency matrix $\mathbf{V}$.

Before addressing the diagonalization of $\bf U$ in detail, we first give two concrete examples of hidden entanglement.

\subsection{Examples of hidden entanglement}

\subsubsection{Two qumodes}\label{sec:ex}
We consider here the very simple case of two independent single-mode squeezed states, of respective parameters $r_{1,2}$, interfering at a balanced beamsplitter. Derivation details can be found in the supplemental material~\cite{SupMat}. 
The resulting graph is purely imaginary
\begin{equation}\label{eq:B0f}
\setlength{\unitlength}{.35in}
\begin{picture}(8,1.1)
\thinlines
\put(-.5,-.1){$\ket{\mathcal B(0,r_1,r_2)}_{12}=$}
\put(4,0){\circle*{.5}}
\qbezier(4,0)(2.8,.75)(2.75,0)
\qbezier(4,0)(2.8,-.75)(2.75,0)
\put(7.,0){\circle*{.5}}
\qbezier(7.,0)(8.2,.75)(8.25,0)
\qbezier(7.,0)(8.2,-.75)(8.25,0)
\put(4.25,0){\line(1,0){3}}
\put(4.1,.35){$\begin{matrix}\scriptstyle i(e^{-2r_1}\\ \scriptstyle \quad\ \ - e^{-2r_2})\to0\end{matrix}$}
\put(1.9,.8){$\begin{matrix}\scriptstyle i(e^{-2r_1}\\ \scriptstyle \quad\ \ + e^{-2r_2})\to0\end{matrix}$}
\put(6.6,.8){$\begin{matrix}\scriptstyle i(e^{-2r_1}\\ \scriptstyle \quad\ \ + e^{-2r_2})\to0\end{matrix}$}
\end{picture}
\end{equation}
Since all edges are exponentially decreasing with the squeezing, a logical conclusion would be that the state is, for all intents and purposes, equivalent to the product state obtained in the infinite squeezing limit, i.e., 
\begin{equation}
\setlength{\unitlength}{.35in}
\begin{picture}(8,.5)
\thinlines
\put(-.5,.15){$\ket{\mathcal B(0,r_1,r_2)}_{12}\  \underset{ \scriptscriptstyle r_{1,2}\to\infty}{\longrightarrow}$}
\put(4,0.25){\circle*{.5}}
\put(7.,0.25){\circle*{.5}}
\end{picture}\label{eq:crap}
\end{equation}
This conclusion is incorrect. While state $\ket{\mathcal B(0,r,r)}_{12}$ is, indeed, an exact  product state (which was demonstrated experimentally~\cite{Bruckmeier1997}), state $\ket{\mathcal B(0,r_1,r_2\neq r_1)}_{12}$ can, however, be strongly entangled. We now prove this.

An independent quantitative bipartite entanglement criterion is the generalization of the Peres-Horodecki partial transpose criterion~\cite{Peres1996,Horodecki1996} to continuous variables~\cite{Simon2000} (see also Ref.~\citenum{Duan2000}). Bipartite CV  nonseparability is characterized by the symplectic eigenvalues of the covariance matrix ${\mathbf{\tilde\Sigma}}$ of the partially transposed density operator, which is equivalent to a phase space reflection~\cite{Simon2000}: if the original covariance matrix is ${\mathbf{\Sigma}}$, then ${\mathbf{\tilde\Sigma}}=\mathbf{\Lambda}\mathbf{\Sigma}\mathbf{\Lambda},$ where $\mathbf{\Lambda}$=diag$(1,1,-1,1)$. Entanglement is present if at least one of the symplectic eigenvalues of ${\mathbf{\tilde\Sigma}}$ is less than $\frac12$, their product being $\frac14$. These symplectic eigenvalues are defined as the absolute values of the eigenvalues of $i\mathbf{\tilde\Sigma}\mathbf{\Omega}$ with 
$
	\mathbf{\Omega}=\begin{psmallmatrix}
	\mathbb{0}&\mathbb{1}\\
	-\mathbb{1}&\mathbb{0}
	\end{psmallmatrix}. 
$
The symplectic eigenvalues of $\ket{\mathcal B(0,r_1,r_2)}_{12}$ 
are
\begin{align}\label{eq:SimpEigB}
    \lambda_\pm=\tfrac12\,e^{\pm(r_1-r_2)},
\end{align} 
which shows that the state is a product state iff  $r_1$=$r_2$ but can be significantly entangled if the difference $r_1$-$r_2$ is large. Let's take the case $r_1$=$2r_2$ with $r_2$ already large, e.g.\ 20 and 10 dB squeezing. Then the amount of entanglement in $\ket{\mathcal B(0,2.30,1.65)}_{12}$ is equivalent to that present in a 10 dB-squeezed two-mode-squeezed state,even though all edges in \eq{B0f} are vanishing.

Thus, the null graph edges given by $\bf V$ clearly fail to give the proper description here, even though $\tr U$$\to$0 in both cases. The key points are that the infinite squeezing limit incorrectly symmetrizes the situation, giving the wrong description for finite squeezing, and that the vanishingly small, {\em yet nonzero} off-diagonal entries of $\bf U$ still yield large $\av{Q_{i}Q_{j}}$ correlations from $\mathbf U^{-1}$, as per \eq{U-1}, which are key to the discrepancy.

If we use single-mode symplectic transformations, i.e.\ GLUs, to diagonalize $\bf U$, the entanglement can't be changed~\cite{Simon2000,Adesso2006}. We therefore seek GLUs that transform $\ket{\mathcal B(0,r_1,r_2\neq r_1)}_{12}$ into a two-mode-squeezed state of equal entanglement: the result is~\cite{SupMat}
\begin{equation}\label{eq:booyah}
\ket{\mathcal E\left(r_-\right)}_{12}\!=\!S_1\left(r_+\right)S_2\left(r_+\right)\!\ket{\mathcal B(0,r_1,r_2)}_{12},
\end{equation}
where $S(r)$=$\exp[\frac r{2}(a^{\dag2}-a^2)]$ is a phase squeezing operator (for $r$$>$0) and $r_\pm=\left(r_1\pm r_2\right)/2$. 
\Eq{booyah} solves the conundrum: in finitely squeezed states, entanglement can be hidden by single-mode squeezing in non-symmetric cases where the squeezing is not evenly distributed between the modes. This situation is always absent in the infinitely squeezed case which is, by force, totally symmetric.

After single-mode squeezing operations, \eq{booyah}, and phase shifts carry out $\bf Z$$\mapsto$$\mathbf Z'$, we obtain [see \eq{T1}]
\begin{align}
\tr {U'} = 2\, \text{sech} (r_1-r_2) \underset{r_1>r_2\gg1}{\longrightarrow}4 e^{-(r_1-r_2)},
\end{align}
which can be compared to the value of \eq{B0f}, before the single-mode squeezing operations,
\begin{align}
\tr U 
&= 2(e^{-2r_1}+e^{-2r_2})
\underset{r_1>r_2\gg1}{\longrightarrow}2 e^{-2r_2}.
\end{align}
While both traces tend to zero in the infinite squeezing limit, graph  $\bf V'$ [\eq{T2} with $r$=$r_{-}$] reveals the entanglement of the state whereas graph $\bf V$ [\eqs{B0f}{crap}] does not.

Another important instance of quantum state distortion invisible in the infinite squeezing limit is the generation of two-mode-squeezed state from the interference of orthogonal single-mode-squeezed states: 
\begin{align}
\ket{\mathcal B(\tfrac\pi2,r,r)}=\ket{\mathcal E(r)},
\end{align}
which is relevant to a multitude of CVQI experiments, a few prominent examples of which are Refs.~\citenum{Ou1992,Furusawa1998,Yokoyama2013,Yoshikawa2016,Asavanant2019,Larsen2019}. If the states aren't identically squeezed, we get \begin{align}\label{eq:badtms}
\ket{\mathcal B(\tfrac\pi2,r_{1},r_{2})}=S_{1}(-r_{-})S_{2}(-r_{-})\ket{\mathcal E(r_{+})},
\end{align}
which features, on top of the desired two-mode squeezing by $r_{+}$, excess uncorrelated quantum noise on each qumode, each being independently antisqueezed by $r_{-}$. 

\subsubsection{Six qumodes}

We now turn to a highly multipartite example and show that the two dramatically different graphs of \fig{DiscGraph} are, in fact, GLU-equivalent.
\begin{figure}[htbp]
\centerline{\includegraphics[width=\columnwidth]{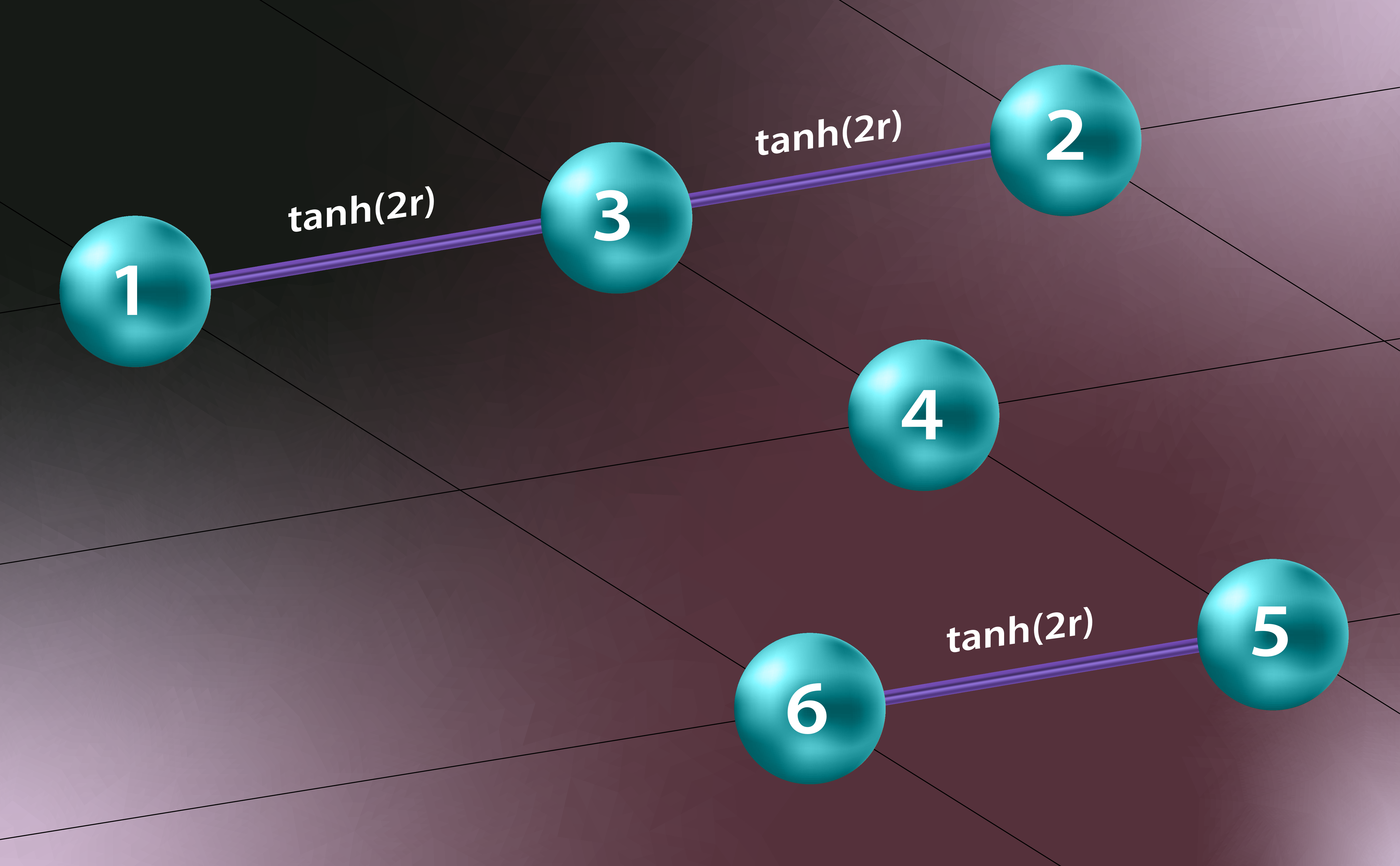}}
\centerline{\includegraphics[width=\columnwidth]{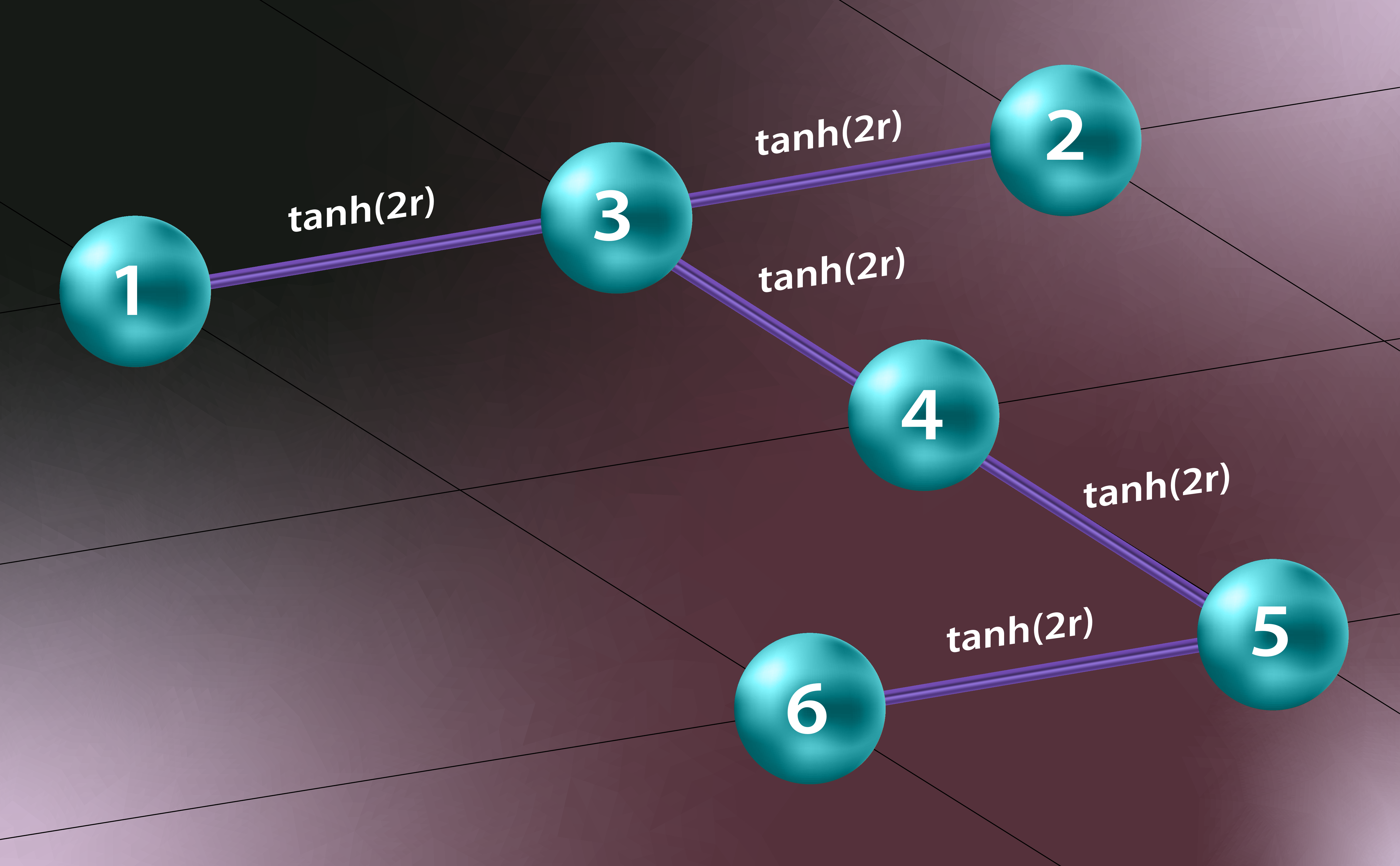}}
\caption{Top, cluster state graph given by matrix $\mathbf{V}$ in \eq{DiscVmatrix}. Note the three disconnected subgraphs. Bottom, cluster state graph given by matrix $\mathbf{V'}$ in \eq{vp}, obtained by diagonalizing $\bf U'$ by GLUs. The state is now one connected graph, all hidden entanglement having been revealed.}
\label{fig:DiscGraph}
\end{figure}

\Fig{DiscGraph}, Top, represents the real part (adjacency matrix {\bf V}) of the 6-mode graph $\mathbf{Z}=\mathbf{V}+i \mathbf{U}$: 
\begin{align}\label{eq:DiscVmatrix}
	\mathbf{V}&=
	\begin{pmatrix}
		0 & 0 & t & 0 & 0 & 0 \\
		0 & 0 & t & 0 & 0 & 0 \\
		t & t & 0 & 0 & 0 & 0 \\
		0 & 0 & 0 & 0 & 0 & 0 \\
		0 & 0 & 0 & 0 & 0 & t \\
		0 & 0 & 0 & 0 & t & 0 
	\end{pmatrix}\\
	\mathbf{U}&=
	\begin{pmatrix}
		c^{-1}  & 0 & 0 & 0 & 0 & 0 \\
		0 & c^{-1}  & 0 & 0 & 0 & 0 \\
		0 & 0 & c^{-1}  & -c^{-1}t  & c^{-1}t^2  & 0 \\
		0 & 0 & -c^{-1}t  & c^{-1}  & -c^{-1}t  & 0 \\
		0 & 0 & c^{-1}t^2  & -c^{-1}t  & c^{-1}  & 0 \\
		0 & 0 & 0 & 0 & 0 & c^{-1}  
	\end{pmatrix}.
\end{align}
where $t$=$\tanh 2r$ and $c^{-1}$=$\text{sech}2r$. This state clearly verifies $\mathbf{U}\to \mathbf{0}$ for $r\to \infty$, because $t\to1$ and $c^{-1}\to0$. \Fig{DiscGraph}, Top, shows three disconnected graphs, the chain 1-3-2, the chain 5-6 and sole qumode 4.  \Fig{DiscGraph}, Top, however, is incorrect: there are hidden quantum correlations, or hidden entanglement, between modes 3, 4, and 5, which are present in $\bf U$ and will disrupt quantum computation over the cluster state. For example, a measurement of isolated qumode 4 isn't expected to affect the rest of the graph since it's disconnected from it. Yet, it will. These hidden correlations can actually be quite strong, even with $\tr U\to0$. We can see this from \eq{covQ}. In this instance we have
\begin{align}
	\mathbf{U}^{-1}=
	\begin{pmatrix}
		c & 0 & 0 & 0 & 0 & 0 \\
		0 & c & 0 & 0 & 0 & 0 \\
		0 & 0 & c^3 & s \,c^2 & 0 & 0 \\
		0 & 0 & s \,c^2 & c (c^2+s^2) & s \,c^2 & 0 \\
		0 & 0 & 0 & s \,c^2 & c^3 & 0 \\
		0 & 0 & 0 & 0 & 0 & c \\
	\end{pmatrix}
\end{align}
where $c=\cosh2r$ and $s=\sinh2r$, which are very large. The only way to suppress these spurious correlations is, of course, to diagonalize $\bf U$. This must be done without changing the entanglement of the graph state, i.e., using only local symplectic, equivalently local linear unitary, operations, i.e., GLUs. In this work, we identify all situations where this is feasible and demonstrate a GLU-diagonalization procedure that will succeed whenever possible. In that sense, given any Gaussian state, we are now able to ascertain whether it is a valid cluster state, i.e., with $\bf U$ diagonal and $\tr U\to0$, or not.

Let's go back to our 6-mode example. A Gaussian unitary $\mathcal U$ can be represented in the Heisenberg picture by a symplectic matrix $\bf S$ such that
\begin{align}
	\mathbf{R}^\prime=\mathcal{U}^\dagger\mathbf{R}\,\mathcal{U}=\mathbf{S}\mathbf{R}.
\end{align}
The symplectic symmetry requires that $\mathbf{S}$ fulfill $\mathbf{S}\mathbf{\Omega}\mathbf{S}^{\text{T}}$=$\mathbf{\Omega}$, which is equivalent to preserving the canonical commutation relations $[R_j,R_k]=[R_j^\prime,R_k^\prime]={\Omega}_{jk}$. 
Any GLU must then have the form
\begin{align}\label{eq:glu}
\mathcal U &= \bigotimes_{j=1}^{N}\mathcal U_{j}\\
\mathbf S &= \bigoplus_{j=1}^{N}\mathbf S_{j} \label{eq:glu}
\end{align}
so that
\begin{align}
	\begin{pmatrix}Q_j^\prime\\P_j^\prime\end{pmatrix}=\mathcal{U}_{j}^\dagger\begin{pmatrix}Q_j\\P_j\end{pmatrix}\mathcal{U}_{j}=\mathbf{S}_j\begin{pmatrix}Q_j\\P_j\end{pmatrix}\; j=1,...,N.
\end{align}
In our example, taking $\mathbf{U}$ to a diagonal from can be done with single-mode squeezing GLUs:
\begin{align}
	\mathbf{S}_1=\mathbf{S}_2=\mathbf{S}_3^{-1}=\mathbf{F}\mathbf{S}_4^{-1}
	=\mathbf{S}_5^{-1}=\mathbf{S}_6=
	\begin{psmallmatrix}
		c^{-1}  & 0 \\
		0 & c 
	\end{psmallmatrix},
\end{align} 
where $\mathbf{F}=\begin{psmallmatrix}	0 & -1 \\ 1 & 0 \end{psmallmatrix}$ is a $\pi/2$ rotation in phase space, a.k.a.\ a Fourier transform.
Under this transformation the $\bf Z$-graph becomes $\mathbf{Z}'=\mathbf{V}'+i\mathbf{U}'$, with~\cite{Menicucci2011} 
\begin{align}\label{eq:vp}
\mathbf{V}^\prime&=
\begin{pmatrix}
	0 & 0 & t & 0 & 0 & 0 \\
	0 & 0 & t & 0 & 0 & 0 \\
	t & t & 0 & t & 0 & 0 \\
	0 & 0 & t & 0 & t & 0 \\
	0 & 0 & 0 & t & 0 & t \\
	0 & 0 & 0 & 0 & t & 0 
\end{pmatrix}\\
\mathbf{U}^\prime&=
\begin{pmatrix}
	c^{-3} & 0 & 0 & 0 & 0 & 0 \\
	0 & c^{-3} & 0 & 0 & 0 & 0 \\
	0 & 0 & c^{-1}  & 0 & 0 & 0 \\
	0 & 0 & 0 & c^{-1}  & 0 & 0 \\
	0 & 0 & 0 & 0 & c^{-1}  & 0 \\
	0 & 0 & 0 & 0 & 0 & c^{-3} 
\end{pmatrix}.
\end{align}
Now that $\bf U'$ is diagonal (still with $\tr{U'}\to0$), no hidden entanglement is present and the $\bf V'$ graph in \fig{DiscGraph}, bottom, shows the complete picture, which includes the expected 3-4 and 4-5 entanglement edges. 

As we mentioned earlier, the previous best practice for calculating the closest graph $\mathbf{V'}$ approximated by a given Gaussian state, minimizing $\tr{U}$ solely by local rotations \cite{Menicucci2011}, doesn't suffice here. As we have shown in these two examples, one must strive to make $\bf U$ diagonal by using all possible GLUs, including squeezing and shearing.

Finally, it is important to realize that this procedure doesn't always succeed: there are cases where hidden entanglement is irreducible and cannot be transferred to $\bf V$ under GLUs, i.e., where there exists no GLUs that can make $\bf U$ diagonal. We now examine these cases.

\section{Sufficient criterion for detecting irreducible hidden entanglement}\label{sec:crit}

In this section, we address the question of detecting irreducible hidden entanglement, i.e., the Gaussian states whose matrix $\bf U$ {\em cannot} be diagonalized by GLUs. Such states consequently cannot be made GLU-equivalent to a CV cluster state.

If $\mathbf{U}$ is diagonal, we can write, from \eq{covQ},
\begin{align}\label{eq:lambda}
\text{Cov}[\mathbf Q] &=  \text{diag}\{\lambda_1,\lambda_2,...,\lambda_N\},
\end{align}
where $\lambda_j>0$, $j=1,2,...,N$. This yields
\begin{align}
2\left\langle Q_jQ_k\right\rangle&=\lambda_j\delta_{jk}\\
\left\langle \{Q_j,P_k\}\right\rangle&=\lambda_j{V}_{jk}\\
2\left\langle P_jP_k\right\rangle&=\lambda_j^{-1}\delta_{jk}+\sum_{l}{V}_{jl}\lambda_l{V}_{lk}.\label{eq:last}
\end{align}
We now use the more convenient mode-ordered covariance matrix, defining $\mathbf{x}=(Q_1,P_1,Q_2,P_2,...,Q_N,P_N)^{T}$,
\begin{align}\label{eq:CMModes}
	\mathbf{\Sigma}^{(mode)}&=\text{Cov}[\mathbf{x}]
	=\begin{pmatrix}
		\mathbf{\sigma}_{11}&\mathbf{\sigma}_{12}&\cdots&\mathbf{\sigma}_{1n}\\
		\mathbf{\sigma}_{21}&\mathbf{\sigma}_{22}&\cdots&\mathbf{\sigma}_{2n}\\
		\vdots				 &\vdots			   &\ddots&\vdots\\
		\mathbf{\sigma}_{n1}&\mathbf{\sigma}_{n2}&\cdots&\mathbf{\sigma}_{nn}
	\end{pmatrix},
\end{align}
where
\begin{align}
	\mathbf{\sigma}_{jk}=&\frac{1}{2}\begin{pmatrix}
		\left\langle \{Q_j,Q_k\}\right\rangle&\left\langle \{Q_j,P_k\}\right\rangle\\
		\left\langle \{P_j,Q_k\}\right\rangle&\left\langle \{P_j,P_k\}\right\rangle
	\end{pmatrix}\\
	=&\frac{1}{2}\begin{pmatrix}
		(\mathbf{U}^{-1})_{jk}&\left(\mathbf{U}^{-1}\mathbf{V}\right)_{jk}\\
		(\mathbf{V}\mathbf{U}^{-1})_{jk}&(\mathbf{U}+\mathbf{V}\mathbf{U}^{-1}\mathbf{V})_{jk}
	\end{pmatrix},\label{eq:sigmajk}
\end{align}
which means that $\sigma_{jj}$ is the covariance matrix of qumode $j$ and $\sigma_{j\neq k}=\sigma_{k\neq j}^{T}$ contains all correlations between qumodes $j$ and $k$. In this ordering, the direct-sum GLU of \eq{glu} can be written in block-diagonal form,
\begin{align}
	\mathbf{S}_{local}=\begin{pmatrix}
		\mathbf{S}_1&\mathbb0&\cdots&\mathbb0\\
		\mathbb0&\ddots&\ddots&\vdots\\
		\vdots&\ddots&\ddots&\mathbb0\\
		\mathbb0&\cdots&\mathbb0&\mathbf{S}_N
	\end{pmatrix}
\end{align}
Under GLUs, the covariance matrix blocks evolve as 
\begin{align}
	\mathbf{\sigma}_{jk}\mapsto\mathbf{\sigma}_{jk}^\prime=\mathbf{S}_{j}\mathbf{\sigma}_{jk}\mathbf{S}_{k}^\text{T}\label{eq:sigmajktransform}.
\end{align}
This allows us to establish a first important theorem:
\begin{theorem*}
	The determinant of the correlation matrix $\sigma_{jk}$, Det[$\sigma_{jk}$] is invariant under GLUs.
\end{theorem*}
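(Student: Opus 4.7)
The plan is to prove the theorem directly from the GLU transformation law for the blocks of the mode-ordered covariance matrix, namely \eqref{eq:sigmajktransform}: $\sigma_{jk}' = \mathbf{S}_j \sigma_{jk} \mathbf{S}_k^{\text{T}}$. Taking the determinant of both sides and using multiplicativity, we get
\begin{align}
\det[\sigma_{jk}'] = \det[\mathbf{S}_j]\,\det[\sigma_{jk}]\,\det[\mathbf{S}_k^{\text{T}}] = \det[\mathbf{S}_j]\,\det[\mathbf{S}_k]\,\det[\sigma_{jk}],
\end{align}
where I have used $\det[\mathbf{S}_k^{\text{T}}] = \det[\mathbf{S}_k]$. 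So the claim reduces to establishing that each single-mode symplectic factor satisfies $\det[\mathbf{S}_j] = 1$.

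Next, I would invoke the symplectic condition $\mathbf{S}_j \mathbf{\Omega}_1 \mathbf{S}_j^{\text{T}} = \mathbf{\Omega}_1$ (where $\mathbf{\Omega}_1 = \begin{psmallmatrix} 0 & 1 \\ -1 & 0 \end{psmallmatrix}$ is the single-mode symplectic form, inherited from the commutation relations $[Q_j,P_j]=i$), defining the group Sp(2,$\mathbb{R}$) stated earlier in the paper. Taking determinants of both sides gives $\det[\mathbf{S}_j]^2 \det[\mathbf{\Omega}_1] = \det[\mathbf{\Omega}_1]$; since $\det[\mathbf{\Omega}_1] = 1 \neq 0$, this forces $\det[\mathbf{S}_j]^2 = 1$, i.e., $\det[\mathbf{S}_j] = \pm 1$. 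To pin down the sign, I would rely on the well-known fact that Sp($2n,\mathbb{R}$) is connected and contains the identity with $\det = +1$, or, more concretely for the $n=1$ case, on the group isomorphism Sp(2,$\mathbb{R}$) $\simeq$ SL(2,$\mathbb{R}$), which immediately gives $\det[\mathbf{S}_j] = +1$.

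Combining these two observations, $\det[\sigma_{jk}'] = \det[\sigma_{jk}]$, which establishes the theorem. There is no real obstacle here: the proof is a one-line determinant calculation supported by the standard fact that single-mode symplectic matrices have unit determinant. The only point worth writing out carefully is the sign argument for $\det[\mathbf{S}_j]$, since the bare symplectic condition only yields $\det[\mathbf{S}_j]^2 = 1$; everything else is immediate from \eqref{eq:sigmajktransform}.
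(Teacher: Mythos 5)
Your proof is correct and follows exactly the same route as the paper's: apply the determinant to the block transformation law $\sigma_{jk}' = \mathbf{S}_j\sigma_{jk}\mathbf{S}_k^{\text{T}}$ and use $\det[\mathbf{S}_j]=1$ for symplectic matrices. The only difference is that you also justify the unit-determinant property (including the sign), which the paper simply cites as a known fact.
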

\begin{proof}
	This follows from the fact that the determinant of any symplectic matrix $\mathbf{S}$ is Det[$\mathbf{S}$]=1. Then, \eq{sigmajktransform} yields Det[$\sigma_{jk}^\prime$]=Det[$\sigma_{jk}$].
\end{proof}
The next theorem provides a sufficient, although not necessary, criterion for whether a given Gaussian state is not GLU-equivalent to a cluster state.
\begin{theorem*}
	If a given Gaussian state is GLU-equivalent to a graph state, then  Det$[\sigma_{jk}]\leqslant 0$, $\forall (j,k\neq j)$.
\end{theorem*}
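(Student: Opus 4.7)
The plan is to leverage the GLU-invariance of $\det\sigma_{jk}$ just established in the previous theorem and reduce the problem to an explicit calculation in the graph-state representative of the GLU-equivalence class.

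First, I would observe that by hypothesis there exist GLUs that take the state to a form with $\mathbf{U}$ diagonal --- this is precisely the characterization of a ``valid cluster state'' given in the discussion preceding Eq.~(\ref{eq:lambda}). Since the previous theorem guarantees that the block determinants $\det\sigma_{jk}$ are GLU-invariant, it suffices to prove the inequality under the assumption that $\mathbf{U}$ itself is already diagonal.

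Next, I would substitute into Eq.~(\ref{eq:sigmajk}). With $\mathbf{U}^{-1}$ diagonal and entries $(\mathbf{U}^{-1})_{jj}=2\lambda_j>0$ (positivity follows from positive-definiteness of $\mathbf{U}$), the off-diagonal block reads, for $j\neq k$,
\begin{equation*}
\sigma_{jk}=\begin{pmatrix} 0 & \lambda_j V_{jk} \\ \lambda_k V_{jk} & \sum_{l} V_{jl}\lambda_l V_{lk} \end{pmatrix},
\end{equation*}
using the symmetry $V_{jk}=V_{kj}$. The $(1,1)$ entry vanishes because $\mathbf{U}^{-1}$ has no off-diagonal matrix elements in this basis, while the off-diagonal entries pick up single factors of $\lambda_j$ and $\lambda_k$ from the diagonal action of $\mathbf{U}^{-1}$. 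The $2\times2$ determinant then collapses to $\det\sigma_{jk}=-\lambda_j\lambda_k V_{jk}^2$, which is manifestly $\le 0$, with equality iff modes $j$ and $k$ are not adjacent in the graph.

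The substantive content of the theorem is essentially already done for us by the GLU-invariance lemma; what remains is a direct $2\times2$ determinant calculation, so I do not foresee serious obstacles. The one point requiring care is recognizing that the hypothesis ``GLU-equivalent to a graph state'' is precisely the statement that some GLU diagonalizes $\mathbf{U}$, which is the characterization of a valid CV cluster state emphasized in the preceding discussion. Once that identification is made explicit, the sign of the determinant follows immediately from the vanishing of the position--position block of $\sigma_{jk}$ in the graph-state basis.
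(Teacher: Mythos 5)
Your proof is correct and follows essentially the same route as the paper's: invoke the GLU-invariance of $\det\sigma_{jk}$ from the preceding theorem, reduce to the $\mathbf{U}$-diagonal representative, and read off $\det\sigma_{jk}=-\lambda_j\lambda_k V_{jk}^2\leqslant 0$ from the vanishing of the $\langle Q_jQ_k\rangle$ entry in Eq.~(\ref{eq:sigmajk}). (A factor of $\tfrac12$ is dropped in your $(2,2)$ entry, but this is immaterial since the determinant does not depend on it once the $(1,1)$ entry vanishes.)
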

\begin{proof}
	Let's go back to the $\bf U$ diagonal case. Using \eqlist{lambda}{last}, we can write
\begin{align}
	\mathbf{\sigma}_{jj}&=\begin{pmatrix}
		\lambda_j&0\\
		0&\text{Cov}[\mathbf P]_{jj}
	\end{pmatrix}\\
	\mathbf{\sigma}_{j\neq k}&=\begin{pmatrix}
		0&\lambda_j{V}_{jk}\\
		\lambda_k{V}_{jk}&\text{Cov}[\mathbf P]_{jk}
	\end{pmatrix} \label{eq:sigmajkUdiag}
\end{align}
where we have assumed, without losing any generality, that we do not have any self loops on our real graph, i.e., ${V}_{jj}=0$, $\forall j$.\footnote{This doesn't detract from the generality of our approach because any $\mathbf{V}_{jj}\neq0$ can be removed via GLU $\mathbf{S}_{j}=\begin{psmallmatrix}1&0\\-\mathbf{V}_{jj}&1\end{psmallmatrix}$.}
From \eq{sigmajkUdiag}, we deduce that the determinant of $\sigma_{jk}, j\neq k$ for $\bf U$ diagonal is  
\begin{align}\label{eq:neg}
\text{Det}[\sigma_{jk}]&=-\lambda_j\lambda_k{V}_{jk}^2\leqslant0.
\end{align}
Note that $\text{Det}[\sigma_{jk}]$=0 $\Leftrightarrow$ $V_{jk}$=0. Since Det[$\sigma_{jk}$] is GLU-invariant, any Gaussian state that is GLU-equivalent to a cluster state (i.e., has $\mathbf{U}$ diagonal) has nonpositive Det$\left[\sigma_{jk}\right], \forall j\neq k$. 
\end{proof}

The theorem is not a logical equivalence because there are more submatrices $\sigma_{jk}$ to process than available GLUs when $N(N-1)/2>N$, i.e., when the total number of qumodes $N>3$. In this case, it is possible to have states with Det$[\sigma_{jk}]\leqslant 0$, $\forall (j,k\neq j)$ that are nonetheless not GLU-equivalent to a cluster state.

Exceptions to this are the cases $N$=2, for which all two-mode pure states are GLU-equivalent to two-mode squeezed states and also to two-mode cluster states, and $N$=3, for which the number of submatrices $\sigma_{jk}$ is $N(N-1)$/2 = 3 and 3-mode Gaussian states with Det$[\sigma_{jk}]\leqslant 0$, $\forall (j,k\neq j)$ are therefore GLU-equivalent to cluster states (See supplemental material, Appendix B). 

The contraposition of the theorem above yields the sufficient criterion for Gaussian states that cannot be expressed as graph states, because $\bf U$ cannot be diagonalized by GLUs: \\

{\em If $\exists (j,k\neq j)$ such that Det$[\sigma_{jk}]>0$, then the corresponding Gaussian state is not GLU-equivalent to a graph state.}\\\ \\
To use this criterion, one just has to check all off-diagonal 2$\times$2 minors of the mode-ordered covariance matrix: a single positive Det$[\sigma_{jk}]$ is proof that irremovable hidden entanglement exists between $j$ and $k$ and that the corresponding state can never be GLU-equivalent to a cluster state.

\section{General algorithm for diagonalizing U by Gaussian local unitaries}\label{sec:diag}

We now extend the analysis of the previous section to deriving a general algorithm for diagonalizing $\bf U$ using GLUs. 

The algorithm is based on noticing, from the quadrature-ordered covariance matrix of \eqs{cova}{U-1}, that the complete absence of amplitude correlations, $\av{Q_{j}Q_{k}}=0$, $\forall$ ($j$,$k$), is logically equivalent to a diagonal $\mathbf U^{-1}$ and hence to a diagonal $\bf U$. We now introduce the GLU aspect of the diagonalization algorithm, which is best seen in the mode-ordered covariance matrix of \eqlist{CMModes}{sigmajk}: we require a set of GLUs which cancel the upper left entry of all 2$\times$2 submatrices $\sigma_{jk}$. Because there are up to $N(N-1)/2$ such submatrices and only $N$ available GLUs, it's clear that the algorithm cannot always be successful: for $N$$>$3, canceling multiple $\av{Q_{j}Q_{k}}$ with a single GLU will be required, thereby placing symmetry constraints on the Gaussian state under consideration. A detailed study of such constraints is outside the scope of this paper.

One should note that the algorithm will succeed for all Gaussian states that are GLU-equivalent to cluster states (green subset of \fig{van}) and will fail for all the rest (red subset of \fig{van}), including those undetected by the sufficient criterion derived in \sect{crit}.

We first proceed to define a ``standard'' covariance matrix of a valid Gaussian graph state (i.e.,  for which $\mathbf{U}$ is diagonal). We draw inspiration from the definition of a standard covariance matrix for LU-equivalent Gaussian states  by Adesso~\cite{Adesso2006} and by Giedke and Kraus~\cite{Giedke2014}.  We first define the single mode squeezing operations
\begin{align}
	\mathbf{T}_j=
	\begin{pmatrix}\lambda _j^{-1/2}
		 & 0 \\
		0 & \lambda _j^{1/2} 
	\end{pmatrix},\quad j=1,...,N.
\end{align}
Then, using the transformation rule of \eq{sigmajktransform}, we define the standard covariance matrix as
\begin{align}
	\tilde{\mathbf{\sigma}}_{jj}&=\mathbf{T}_j\mathbf{\sigma}_{jj}\mathbf{T}_j^{\text{T}}=\begin{pmatrix}
		1&0\\
		0&\lambda_j\text{Cov}[\mathbf P]_{jj}
	\end{pmatrix}\\
	\tilde{\mathbf{\sigma}}_{j\neq k}&=\mathbf{T}_j\mathbf{\sigma}_{jk}\mathbf{T}_k^{\text{T}}=\sqrt{\lambda_j\lambda_k}\begin{pmatrix}
		0&{V}_{jk}\\
		{V}_{jk}&\text{Cov}[\mathbf P]_{jk}
	\end{pmatrix} \label{eq:StandFormCorrM}
\end{align}
where the off-diagonal terms of $\tilde{\mathbf{\sigma}}_{jk}$ are the same and equal to $\pm\sqrt{-\text{Det}[\mathbf{\sigma}_{jk}]}$. Note that it is easy to see that this form corresponds to a diagonal $\bf U$ by virtue of \eq{sigmajk}. 

The goal is to find $N$ single-mode GLUs that take all covariance submatrices 
\begin{align}
	\sigma_{j\neq k}=
	\begin{pmatrix}
		a_{jk} & b_{jk} \\
		c_{jk} & d_{jk} \\
	\end{pmatrix}
\end{align} 
to  the standard form of \eq{StandFormCorrM}. Each of the $N$ sought GLUs is written as an Iwasawa decomposition~\cite{Arvind1995,Simon1988,Arvind1995a}
\begin{align}
	\mathbf{S}_j=
	\begin{pmatrix}
		1 & 0 \\
		q_j & 1 
	\end{pmatrix}
	\begin{pmatrix}
		r_j & 0 \\
		0 & r_j^{-1}
	\end{pmatrix}
	\begin{pmatrix}
		\cos \phi _j & -\sin \phi _j \\
		\sin \phi _j & \cos \phi _j 
	\end{pmatrix},\label{eq:SMSIwasawa}
\end{align}
where $r_j>0$. The leftmost matrix in \eq{SMSIwasawa} corresponds to a shearing transformation which shifts the diagonal elements of $\mathbf{V}$: ${V}_{jj}\mapsto{V}_{jj}+q_j$. This doesn't play any role in the process of diagonalizing $\mathbf{U}$, so without loss of generality we can set $q_j=0$, allowing us to reduce the free parameters of $\mathbf{S}_j$ to only two. The second matrix, where $r_j>0$, corresponds to a single mode squeezing and the third to a single mode rotation. In order to find GLUs that diagonalize $\mathbf{U}$, we proceed by sequentially finding the parameters $r_j$ and $\phi_j$  in \eq{SMSIwasawa} by operating one by one in all the correlation submatrices $\sigma_{j\neq k}$ and taking them to the standard form of \eq{StandFormCorrM}. It is important to point out that two GLUs $\mathbf{S}_j$ and $\mathbf{S}_k$ act simultaneously on each given $\sigma_{j\neq k}$, as shown by \eq{sigmajktransform}; hence, at each stage of the process of eliminating the term $\av{Q_{j}Q_{k}}$,  both GLUs $\mathbf{S}_j$ and $\mathbf{S}_k$ must be taken into account at the same time. This procedure varies slightly depending on whether Det$[\sigma_{j,k}]< 0$, presented below, or Det$[\sigma_{j,k}]= 0$, presented in Appendix C (Supplemental Material).

Assuming that Det$[\sigma_{j,k}]\neq 0$, let $\mathbf{M}=\begin{psmallmatrix} A & B \\C & D \\\end{psmallmatrix}$ be a general $2\times2$ matrix. If we want to use the GLU multiplying to the left in (\ref{eq:sigmajktransform}),  we find a GLU $\mathbf S_\text{left}$ such that
\begin{align}	
	\mathbf{S}_\text{left}\mathbf{M}=
	\begin{pmatrix}
		0 & \delta  \\
		\delta  & \delta\,\frac{A B+C D}{A^2+C^2} 
	\end{pmatrix},
	\label{eq:LeftDiag}
\end{align}
where $ \delta=\sqrt{-\text{Det}[\mathbf{M}]}=\sqrt{B C-A D}$. The Iwasawa parameters  are 
\begin{gather}\label{eq:LeftDiagPar}
	\cos (\phi_\text{left} )= \frac{C}{\sqrt{A^2+C^2}};\quad\sin (\phi_\text{left} )= \frac{A}{\sqrt{A^2+C^2}}\\
	r_\text{left}= \delta^{-1}\,\sqrt{A^2+C^2}. 
\end{gather}
Likewise, we can use a GLU to the right, 
\begin{align}\label{eq:RightDiag}	
	\mathbf{M}\mathbf{S}_\text{right}^\text{T}=
	\begin{pmatrix}
		0 & \delta  \\
		\delta  & \delta\,\frac{A C+B D}{A^2+B^2} 
	\end{pmatrix}
\end{align}
with \begin{gather}\label{eq:RightDiagPar}
	\cos (\phi_\text{right} )= \frac{B}{\sqrt{A^2+B^2}};\quad\sin (\phi_\text{right} )= \frac{A}{\sqrt{A^2+B^2}}\\
	r_\text{right}= \delta^{-1}\,\sqrt{A^2+B^2}. 
\end{gather}

If we set $\mathbf{M}=\sigma_{jk}\mathbf{S}_k^\text{T}$ in \eqs{LeftDiag}{LeftDiagPar}, then we have taken $\sigma_{jk}$ to the standard form only using $\mathbf{S}_j$ regardless of the value that $\mathbf{S}_k$ may take. If $\mathbf{S}_k$ was already determined in a previous stage, then (\ref{eq:LeftDiagPar}) fixes the value of $\mathbf{S}_j$. In the other hand, if $\mathbf{S}_k$ have not been determined yet, then we will have that $\mathbf{S}_j$ will be a function of $\mathbf{S}_k$ (and $\sigma_{jk}$), denoted as $\mathbf{S}_j\left[\mathbf{S}_k \right]$ and $\mathbf{S}_k$ can be used to take another submatrix $\sigma_{kl}, l\neq j$ to the standard form.
Using other matrices $\sigma_{mj}, m\neq k$ and taking them to the standard form via the procedure above we can find expressions of the form $\mathbf{S}_m=\mathbf{S}_m\left[\mathbf{S}_j \right]=\mathbf{S}_m\left[\mathbf{S}_k \right]$ (where we have taken into account that $\mathbf{S}_j=\mathbf{S}_j\left[\mathbf{S}_k \right]$). Finally the value of $\mathbf{S}_k$ can be found using a correlation matrix of type  $\sigma_{km}$ where the $\alpha_{km}^\prime$ term must to be cancel out from the equation $\mathbf{\sigma}_{km}^\prime=\mathbf{S}_{k}\mathbf{\sigma}_{km}\mathbf{S}_{m}[\mathbf{S}_{k}]^\text{T}$.
Explicit forms of $\mathbf{S}_j\left[\mathbf{S}_k \right]$, more details of this procedure and how it is modified when we have singular matrices are found in Appendix C.

The algorithm is complete when all the $N$ GLUs are determined using the above procedure which, in general, makes use of a subset of all the correlation matrices $\sigma_{jk}$. As shown in Appendix C, the case may present itself where we find multiple solutions for the GLUs $\mathbf{S}_j$ $j=1,...,N$, but those solutions are the most general ones that take the correlation submatrices  $\sigma_{jk}$ used through the algorithm to the standard form. Therefore the last step is to apply those solutions to the remaining correlation matrices that were not used yet and test whether or not they all also take the standard form. If one of these GLUs sets successfully zeroes out all $\av{Q_{j}Q_{k}}$ terms, $\forall$ ($j$,$k$), then the corresponding Gaussian state is a valid (GLU-equivalent) CV cluster state, else there exists irreducible hidden entanglement.

\section{Conclusion}

We have reported and fully analyzed a fundamental effect that is specific of continuous-variable quantum information and doesn't occur in qubit-based quantum information. This effect is linked to the description of any Gaussian Wigner function, in the Iwasawa decomposition, by a complex graph state~\cite{Menicucci2011} whose stabilizers aren't necessarily unitary---or, equivalently, whose nullifiers aren't necessarily Hermitian. In order for the graph to provide a valid representation of a cluster state---with unitary stabilizers and Hermitian nullifiers---the imaginary part of the graph must vanish. Whereas all previous work did ensure that this always occurred in the infinite squeezing limit, our work shows that this limit isn't reliable as state asymmetries can give rise to hidden entanglement, at times irreducibly so. Such hidden entanglement may morph into a more versatile and useful cluster state but will disrupt quantum information processing if not accounted for. We note that an interesting extension of this work might be to explore its possible connections to an earlier study of tripartite entanglement by Adesso and Illuminati, in which pure, symmetric three-mode Gaussian states were shown to be simultaneous CV analogues of both the GHZ and the W states of three qubits~\cite{Adesso2006a}.

A simple occurrence of this effect is present in the case of the workhorse of entanglement generation: the interference of two orthogonally squeezed quantum quadratures at a balanced beamsplitter~\cite{Ou1992}, which will present excess quantum noise, as per \eq{badtms}, if the initial squeezed modes have different parameters. This is a heretofore undiscovered source of imperfections in realistic CVQI experiments. This result emphasizes the importance of forgoing the use of the infinite squeezing limit as more than a simplification when dealing with continuous-variable quantum information. An example of this philosophy is a recent result on the fault tolerance of statistical mixtures of cluster states in CVQC~\cite{Walshe2019}. 

We emphasize here that our results do not alter the feasibility of continuous-variable quantum computing with finitely squeezed Gaussian states. Indeed, it is well known that these Gaussian resources must be completed by non-Gaussian ones, necessary for exponential speedup~\cite{Mari2012,Bartlett2002} and fault tolerance~\cite{Menicucci2014ft}. Moreover, one should also remember that the resilience of cluster states under measurement~\cite{Briegel2001} makes it straightforward to cut out of a Gaussian complex graph its irreducibly imaginary edges (if they are reasonably local) so as to make the graph real and therefore a valid CV cluster state.

Our work constitutes the first concrete analytic correspondence between arbitrary Gaussian states and CV cluster states. This paves the way to deriving mappings between cluster states and the universal Bloch-Messiah decomposition of Gaussian states~\cite{Braunstein2005} which opens up a new area of research of translating nonconventional quantum circuits, such as Gaussian boson sampling~\cite{Hamilton2017}, into cluster states and one-way quantum computing.

This work is also relevant to quantum state engineering given that it allows us to identify when a given general scalable source of entangled gaussian states can be also a source of CV cluster states, which are a platform for universal quantum computation.

We thank Nicolas Menicucci, Rafael Alexander, and Israel Klich for stimulating discussions. We are grateful to one of the referees for pointing out the work of Ref.~\citenum{Adesso2006a}. This work was supported by NSF grant PHY-1820882, by the College of Arts and Sciences of the University of Virginia, by FAPESP SPRINT grant 2016/50468-7, and by the Jefferson Lab LDRD project No.~LDRD21-17 under which Jefferson Science Associates, LLC, manages and operates Jefferson Lab.

\bibliography{Pfister}
\bibliographystyle{bibstyleNCM}

\newpage
\onecolumngrid
\appendix

\section{Hidden entanglement and diagonalization of U by GLUs in the two-mode case}

An ideal two-mode cluster state is defined as two $p=0$ momentum eigenstates coupled by a $CZ=\exp(iQ_{1}Q_{2})$ gate
\vglue .001in
\begin{equation}\label{eq:C}
\setlength{\unitlength}{.35in}
\begin{picture}(8,0)
\thicklines
\put(-.5,-.1){$\ket{\mathcal C}_{12}=$}
\put(1.75,0){\circle{1.25}}
\put(1.25,-.1){$\ket{0}_{p1}$}
\put(6.75,0){\circle{1.25}}
\put(6.25,-.1){$\ket{0}_{p2}$}
\put(2.4,0){\line(1,0){3.7}}
\put(3.25,.35){${\rm C_{Z}}=e^{iQ_{1}Q_{2}}$}
\end{picture}
\end{equation}
\vglue .1in
which admit the following nullifiers 
\begin{align}\label{eq:V}
(\mathbf P-\mathbf V\mathbf Q)\ket{\mathcal C}_{12} = \mathbf0\ket{\mathcal C}_{12},
\end{align}
where $\mathbf Q$=($Q_1$,$Q_2)^{\rm T}$, $\mathbf P$=($P_1$,$P_2)^{\rm T}$, $\mathbf V=\begin{psmallmatrix} 0&1\\ 1&0\end{psmallmatrix}$.

In this appendix we study how different finitely squeezed two mode states relate with this ideal cluster state as we approach to the infinite squeezing limit.

We focus on the simple but fundamental case of the state created by the interference at a balanced beamsplitter of two quadrature eigenstates out of phase by  $\theta$.
\begin{align}\label{eq:Btheta}
 \ket{\mathcal B(\theta)}_{12} = B_{12}  \ket0_{\theta1}\ket0_{p2},
\end{align}
where $B_{12}$ = $\exp[-i\frac\pi4(a_1^\dag a_2+a_1a_2^\dag)]$ and where the generalized-quadrature nullifier and eigenstate are
\begin{align}
    A_1(\theta)\ket0_{\theta1}= (\cos\theta\, P_1+\sin\theta\, Q_1)\ket0_{\theta1}=0\ket0_{\theta1}.
\end{align}
The nullifiers of  $\ket{\mathcal B(\theta)}_{12}$ are 
\begin{align}
\mathcal N_1&=B_{12}P_2 B_{12}^\dagger =P_1-P_2\label{eq:n01}\\
\mathcal N_2(\theta)   &=B_{12}A_1(\theta) B_{12}^\dagger =\sin\theta\,(Q_1+Q_2)+\cos\theta\, (P_1+P_2).\label{eq:n02}
    \end{align}

For $\theta=0$, these nullifiers are $P_1\pm P_2$. As the stabilizers form a multiplicative group, the nullifiers form an additive one and a linear combination of nullifiers is a nullifier. Hence $P_{1,2}$ nullify $\ket{\mathcal B(0)}_{12}$, which entails $\mathbf V=\mathbb0$: 
\begin{equation}\label{eq:B0}
\setlength{\unitlength}{.35in}
\begin{picture}(8,0)
\thicklines
\put(-.5,-.1){$\ket{\mathcal B(0)}_{12}=\ket0_{p1}\ket0_{p2}=$}
\put(5.,0){\circle*{.5}}
\put(7.5,0){\circle*{.5}}
\end{picture}
\end{equation}
An edge between two vertices signifies entanglement, its absence signifies separability, and  GLUs cannot transform two separated subgraphs into a connected one.

For $\theta=\pi/2$, we recover the nullifiers of the Einstein-Podolsky-Rosen (EPR) state~\cite{Einstein1935}
\begin{align}
(Q_{1}-Q_{2})\ket{\mathcal E}_{12}&=0\ket{\mathcal E}_{12}  \label{eq:e1} \\
(P_{1}+P_{2})\ket{\mathcal E}_{12}&=0\ket{\mathcal E}_{12}, \label{eq:e2}
\end{align} 
which entails
\begin{align}\label{eq:EPRlimit}
    \ket{\mathcal B({\scriptstyle\frac\pi2})}_{12}=\ket{\mathcal E}_{12}\overset{GLU}\sim\ket{\mathcal C}_{12}.
\end{align} 

For $\theta<\pi/2$, we can rewrite the nullifiers of \eqs{n01}{n02} in cluster state form, 
\begin{align}
\mathbf V = \frac12\begin{pmatrix} \tan\theta & \tan\theta \\ \tan\theta & \tan\theta \end{pmatrix},
\end{align}
which corresponds to the graph
\begin{equation}\label{eq:B}
\setlength{\unitlength}{.35in}
\begin{picture}(8,1.)
\thicklines
\put(-.5,-.1){$\ket{\mathcal B(\theta)}_{12}=$}
\put(3.25,0){\circle*{.5}}
\qbezier(3.25,0)(2.05,.75)(2,0)
\qbezier(3.25,0)(2.05,-.75)(2,0)
\put(6.75,0){\circle*{.5}}
\qbezier(6.75,0)(7.95,.75)(8,0)
\qbezier(6.75,0)(7.95,-.75)(8,0)
\put(3.5,0){\line(1,0){3}}
\put(4.5,.35){$\scriptstyle \frac12\tan\theta$}
\put(1.75,.75){$\scriptstyle \frac12\tan\theta$}
\put(6.75,.75){$\scriptstyle \frac12\tan\theta$}
\end{picture}
\end{equation}
State $\ket{\mathcal B(\theta)}_{12}$ is therefore GLU-equivalent to a cluster state, as expected since all bipartite entangled states are equivalent under GLUs. 

We now turn to the generalization to Gaussian states of the graphical formalism introduced above, which  was developed by Menicucci, Flammia, and van Loock~\cite{Menicucci2011}. The gist of this formalism is that any pure Gaussian state can be represented by a unique graph whose edges are weighted by {\em complex} numbers, the imaginary edges representing the effect of the squeezing. This is because the nullifier of a phase-squeezed state of finite squeezing parameter $r$ isn't $P$ any more but $P-i\,e^{-2r}Q$. As a result the adjacency matrix {\bf Z} of the graph becomes complex 
\begin{equation}
{\bf Z}=\mathbf V+i\mathbf U,
\end{equation}
where $\mathbf V$ is as before. Matrix $\mathbf U$ is symmetric, like $\mathbf V$, and also positive definite. It represents the effects of finite squeezing~\cite{Menicucci2011} and can be interpreted as the error of the Gaussian state in approximating an ideal graph state of weighted adjacency matrix $\mathbf{V}$, as per
\begin{equation}\label{eq:cov}
{\rm Cov}({\bf P} - {\bf V} {\bf Q}) = \frac12{\bf U},
\end{equation}
which generalizes \eq V, Cov denoting the covariance matrix of the nullifiers. Hence,  a  Gaussian state {\bf Z} is a good approximation of a graph state $\mathbf{V}$ if $\mathbf{U}$$\to$$\mathbb{0}$, or $\tr U$$\to$0 since $\mathbf{U}$ is positive definite.

The canonical Gaussian cluster state has two  phase-squeezed qumodes (realistic implementations of phase-quadrature eigenstates, i.e. finite squeezing version of \eq C), of respective squeezing parameters $r_{1,2}$, linked by controlled-phase gates:
\begin{align}
\mathbf Z_\mathcal{C}\equiv\begin{pmatrix}0&1\\ 1&0\end{pmatrix} +i \begin{pmatrix}e^{-2r_1}&0\\ 0&e^{-2r_2}\end{pmatrix},
\end{align}
\begin{equation}\label{eq:Cr}
\setlength{\unitlength}{.35in}
\begin{picture}(8,0.75)
\thicklines
\put(-.5,-.1){$\ket{\mathcal C(r_1,r_2)}_{12}=$}
\put(3.55,0){\circle*{.5}}
\qbezier(3.55,0)(2.35,.75)(2.3,0)
\qbezier(3.55,0)(2.35,-.75)(2.3,0)
\put(6.75,0){\circle*{.5}}
\qbezier(6.75,0)(7.95,.75)(8,0)
\qbezier(6.75,0)(7.95,-.75)(8,0)
\put(3.75,0){\line(1,0){3}}
\put(5.05,.35){$\scriptstyle 1$}
\put(2.25,.7){$\scriptstyle i\,e^{-2r_1}$}
\put(7.,.7){$\scriptstyle i\,e^{-2r_2}$}
\end{picture}
\end{equation}

Our next example is the  Gaussian EPR state, a.k.a.\ the two-mode squeezed state $\ket{\mathcal E(r)}_{12}$, where $r$ is the squeezing parameter~\cite{Ou1992}. Solving the Heisenberg equations for two-mode squeezing Hamiltonian $i\frac r\tau a_1^\dag a_2^\dag +$ H.c.\ over time $\tau$, one finds~\cite{Reid1989}
\begin{align}
\Delta\left(Q_1-Q_2\right) =\Delta\left(P_1+P_2\right) = e^{-r},
\end{align}
which coincide with the nullifiers of \eqs{e1}{e2} in the infinite squeezing limit $r$$\to$$\infty$. Using Gaussian graphical calculus, we find, after a $\frac\pi2$ rotation of one mode~\cite{Menicucci2011}
\begin{align}\label{eq:T1}
\mathbf Z_\mathcal{E}\equiv\begin{pmatrix}0&\tanh2r\\ \tanh2r&0\end{pmatrix} +i \begin{pmatrix} \text{sech}\,2r & 0\\ 0 & \text{sech}\,2r \end{pmatrix},
\end{align}
\begin{equation}\label{eq:T2}
\setlength{\unitlength}{.35in}
\begin{picture}(8,0.75)
\thicklines
\put(-.5,-.1){$\mathcal{F}_1\ket{\mathcal E(r)}_{12}=$}
\put(3.25,0){\circle*{.5}}
\qbezier(3.25,0)(2.05,.75)(2,0)
\qbezier(3.25,0)(2.05,-.75)(2,0)
\put(6.75,0){\circle*{.5}}
\qbezier(6.75,0)(7.95,.75)(8,0)
\qbezier(6.75,0)(7.95,-.75)(8,0)
\put(3.5,0){\line(1,0){3}}
\put(4.5,.35){$\scriptstyle \tanh2r$}
\put(1.95,.7){$\scriptstyle i\,\text{sech}\,2r$}
\put(6.75,.7){$\scriptstyle i\,\text{sech}\,2r$}
\end{picture}
\end{equation}

Note that, in the limit $r$$\to$$\infty$ and up to a GLU (here a Fourier transform: FT), \eqs{Cr}{T2} are identical. 

The next case is much less trivial. The finitely squeezed version of $\ket{\mathcal B(\theta)}_{12}$, \eq{Btheta}, is 
\begin{align}\label{eq:Bthetar}
 \ket{\mathcal B(\theta,r_1,r_2)}_{12} =   B_{12}R_1(\theta)S_1(r_1)S_2(r_2)\ket0_{1}\ket0_{2},
\end{align}
where the initial state is vacuum, $S(r)$=$\exp[\frac r{2}(a^{\dag2}-a^2)]$ is a phase squeezing operator for $r$$>$0, and $R(\theta)$=$\exp(-i\theta a^\dag a)$ is a phase-space rotation operator. 
The Gaussian graph is 
\begin{align}\label{eq:Z}
\mathbf Z\equiv v \begin{pmatrix} 1 & 1 \\ 1 & 1 \end{pmatrix} +i \begin{pmatrix} u_+ & u_- \\ u_- & u_+ \end{pmatrix},
\end{align}
\begin{equation}
\setlength{\unitlength}{.35in}
\begin{picture}(8,.75)
\thicklines
\put(-.5,-.1){$\ket{\mathcal B(\theta,r_1,r_2)}_{12}=$}
\put(4,0){\circle*{.5}}
\qbezier(4,0)(2.8,.75)(2.75,0)
\qbezier(4,0)(2.8,-.75)(2.75,0)
\put(7.,0){\circle*{.5}}
\qbezier(7.,0)(8.2,.75)(8.25,0)
\qbezier(7.,0)(8.2,-.75)(8.25,0)
\put(4.25,0){\line(1,0){3}}
\put(5.,.35){$\scriptstyle v+iu_-$}
\put(2.5,.7){$\scriptstyle v+iu_+$}
\put(7.25,.7){$\scriptstyle v+iu_+$}
\end{picture}
\end{equation}
where
\begin{align}\label{eq:v}
v&=    -\frac{\sin2\theta\sinh2r_1}{2(e^{ 2r_1 } \cos ^2\theta+e^{-2 r_1}\sin ^2\theta)}\\\label{eq:u}
u_\pm&=\frac{e^{-2r_1}}{\cos^2\theta+e^{-4r_1}\sin^2\theta}\pm e^{-2r_2}.
\end{align}
In an initial analysis of this situation for 0$<$$\theta$$<$$\frac\pi2$, one is tempted to dismiss {\bf U} altogether as its elements $u_\pm$ clearly decrease as the squeezing factors. Turning then to {\bf V}, entanglement is clearly present since $v$$\neq$0. This result is well known~\cite{Kim2002}. 

For $\theta$=$\frac\pi2$, we know the result must be a two-mode squeezed state~\cite{Furusawa1998}. However, we have
\begin{align}\label{eq:vpi2}
v&=    0\\\label{eq:upi2}
u_\pm&=e^{2r_1}\pm e^{-2r_2}.
\end{align}
Entanglement would appear to have vanished ($\mathbf V=\mathbb0$) but here the effects of finite squeezing cannot be neglected any longer: the diverging $\mathbf U$ makes $\mathbf V$ irrelevant as an approximation of a graph state by a Gaussian state. It was proposed in Ref.~\citenum{Menicucci2011} that the closest CV graph state that can be approximated by a given Gaussian state could be found by minimizing the trace of $\mathbf{U}$ by local rotations. An extremum of $\tr U$ can always be reached using $\frac\pi2$ qumode rotations, i.e., FTs.\footnote{For any $n$-mode CV with no $p-q$ correlations, the extrema of $\tr U$ after local rotations are located at $\pi/2$ rotation angle of any subset of the original $n$-modes \cite{Menicucci2011}} This property yields
\begin{align}
\mathbf{U'}&=
\begin{pmatrix}
e^{r_1 -r_2}\text{sech}(r_1 +r_2) & 0 \\
0 & e^{r_2-r_1}\text{sech} (r_1 +r_2) \\
\end{pmatrix}
\\
\mathbf{V'}&=
\begin{pmatrix}
0 & \tanh  (r_1 +r_2) \\
\tanh  (r_1 +r_2) & 0 \\
\end{pmatrix}
\end{align}
\begin{equation}
\setlength{\unitlength}{.35in}
\begin{picture}(8,1.1)
\thicklines
\put(-.5,0){$\mathcal F\ket{\mathcal B(\frac\pi2,r_1,r_2)}_{12}=$}
\put(4.5,0.1){\circle*{.5}}
\qbezier(4.5,0.1)(3.3,.76)(3.25,0.1)
\qbezier(4.5,0.1)(3.3,-.74)(3.25,0.1)
\put(7.,0.1){\circle*{.5}}
\qbezier(7.,0.1)(8.2,.76)(8.25,0.1)
\qbezier(7.,0.1)(8.2,-.74)(8.25,0.1)
\put(4.25,0.1){\line(1,0){3}}
\put(5.,.36){$\scriptstyle \tanh(r_1+r_2)$}
\put(2.7,1.){$\begin{matrix}\scriptstyle i\, e^{r_1-r_2}\times\\ \scriptstyle  \text{sech}(r_1+r_2)\end{matrix}$}
\put(7.25,1.){$\begin{matrix}\scriptstyle i\,e^{r_2-r_1}\times\\ \scriptstyle \text{sech}(r_1+r_2)\end{matrix}$}
\end{picture}
\end{equation}
which yields \eq{T2} for $r_1$=$r_2$=$r$ and \eq{C} for $r$$\to$$\infty$.

\section{Three-mode case}

In this section we will show that for the case of a three modes, the requirement Det$\{\sigma_{jk}\}<0$ is not only necessary but also sufficient for a state to be GLU to a diagonal $\mathbf{U}$ state. In \cite{Giedke2014,Adesso2006} was developed the concept of standard forms of covariance matrices, where two Gaussian states are GLU-equivalent iff they have the same standard form of their covariace matrices.

The process to calculate the standard form is to decompose each local operation in its Bloch-Messiah decomposition $\mathbf{S}_j=\mathbf{M}_j\mathbf{\Lambda}_j\mathbf{N}_j^\text{T}$ where $\mathbf{N}_j$ and $\mathbf{M}_j$ are rotation matrices and $\mathbf{\Lambda}_j$ is a diagonal squeezing matrix. Then use $\mathbf{M}_j$ and $\mathbf{\Lambda}_j$ to symplectically diagonalize all the $\sigma_{jj}$ (that is, make them proportional to the identity). And finally using $\mathbf{N}_j$ to diagonalize as many a possible $\sigma_{jk}, j\neq k$ (after applying the first two operations) matrices using it singular value decomposition (SVD) $\sigma_{jk}=\mathbf{A}_{jk}\mathbf{D}_{jk}\mathbf{B}_{jk}$ with $\mathbf{A}_{jk}$ and $\mathbf{B}_{jk}$ orthogonal matrices and $\mathbf{D}_{jk}$ diagonal. If $\sigma_{jk}$ is proportional to an orthogonal matrix, then we can diagonalized it using only one matrix (instead of two as the SVD indicates), lets say $\mathbf{N}_j$, and use the other $\mathbf{N}_k$ in the diagonalization process of another $\sigma_{k,l}$ matrix (see \cite{Giedke2014} for details).

It has been shown also that, for a three-mode Gaussian state, the standard form is a covariance matrix with no $Q-P$ correlations, that is, it has the form
\begin{align}
\mathbf{\Sigma}^{(qp)}=\begin{pmatrix}
\mathbf{\Sigma}^q&\mathbf{0}\\
\mathbf{0}&\mathbf{\Sigma}^p
\end{pmatrix}
\end{align} 
which, when written in the mode ordering, it has the form
\begin{align}
\mathbf{\Sigma}=\begin{pmatrix}
\lambda_1\mathbb{1}&\mathbf{D}_{1,2}&\mathbf{D}_{1,3}\\
&\lambda_2\mathbb{1}&\mathbf{D}_{2,3}\\
&&\lambda_3\mathbb{1}
\end{pmatrix}\label{eqn:3ModeCM}
\end{align}
where $\mathbf{D}_{jk}$ are diagonal matrices and $\mathbb{1}$ is the $2\times2$ identity matrix. We suppose that all three $\mathbf{D}_{jk}$ matrices have negative determinant then
\begin{align}
\mathbf{D}_{jk}=\begin{pmatrix}
\alpha_{jk}&0\\
0&\beta_{jk}
\end{pmatrix}\quad\text{with}\quad\alpha_{jk}\beta_{jk}<0
\end{align}
we can transform  the covariance matrix in (\ref{eqn:3ModeCM}) into a covariance matrix with diagonal $\mathbf{U}$ (i,e $\left\langle Q_jQ_k\right\rangle =0$  $j\neq k$) in two steps. We first apply the following local squeezing operations given by the GLUs
\begin{align}
\mathbf{S}_1^{(a)}=\begin{pmatrix}
\sqrt{-\frac{\alpha_{1,2}}{\beta_{1,2}}}&0\\
0&\sqrt{-\frac{\beta_{1,2}}{\alpha_{1,2}}}
\end{pmatrix}\qquad\mathbf{S}_3^{(a)}=\begin{pmatrix}
\sqrt{-\frac{\alpha_{1,3}}{\beta_{1,3}}}&0\\
0&\sqrt{-\frac{\beta_{1,3}}{\alpha_{1,3}}}
\end{pmatrix}\mathbf{S}_1^{(a)-1}\qquad\mathbf{S}_2^{(a)}=\mathbb{1}
\end{align}
which lead us to a covariance matrix of the form
\begin{align}
\mathbf{\Sigma}=\begin{pmatrix}
\lambda_1\mathbf{S}_1^{(a)}\mathbf{S}_1^{(a)\text{T}}&\mp\sqrt{-\alpha_{1,2}\beta_{1,2}}\sigma_z&\mp\sqrt{-\alpha_{1,3}\beta_{1,3}}\sigma_z\\
&\lambda_2\mathbb{1}&\mathbf{D}_{2,3}^\prime\\
&&\lambda_3\mathbf{S}_3^{(a)}\mathbf{S}_3^{(a)\text{T}}
\end{pmatrix}
\end{align}
where the upper sign correspond to $\alpha_{jk}<0, \beta_{jk}>0$ and the lower sing to the opposite situation, $\mathbf{D}_{2,3}^\prime=$diag$\{\alpha_{2,3}^\prime,\beta_{2,3}^\prime\}=$diag$\{\alpha_{2,3}\sqrt{\frac{\beta_{1,2}\alpha_{1,3}}{\alpha_{1,2}\beta_{1,3}}},\beta_{2,3}\sqrt{\frac{\alpha_{1,2}\beta_{1,3}}{\beta_{1,2}\alpha_{1,3}}}\}$ and $\sigma_z=$diag$\{1,-1\}$ is one of the Pauli matrices. In the second step we apply the local rotations given by
\begin{align}
\mathbf{S}_1^{(b)}=\mathbf{F}\mathbf{S}_2^{(b)\text{T}}\qquad\mathbf{S}_3^{(b)}=\mathbf{F}\mathbf{S}_1^{(b)\text{T}}=\mathbf{S}_2^{(b)}
\end{align}
where $\mathbf{S}_2^{(b)}$ is a rotation to be determined later, $\mathbf{F}=\begin{psmallmatrix*}[c] 0 & 1 \\ -1 & 0 \end{psmallmatrix*}$ is a $\pi/2$ rotation (also called Fourier transform due to the way it act on quantum states) and where we use the fact that rotations in a plane commute with each other. Then given that
\begin{align}
\sigma_z\mathbf{F}^\text{T}=\mathbf{F}\sigma_z=-\begin{pmatrix}
0&1\\
1&0
\end{pmatrix}\equiv-\mathbf{J}\qquad\text{and}\qquad \sigma_z\mathbf{M}=\mathbf{M}^\text{T}\sigma_z\label{eqn:SigmaZComm}
\end{align} 
for any pure 2-dimensional rotation matrix $\mathbf{M}$, we have that
\begin{align}
\mathbf{\Sigma}=\begin{pmatrix}
\lambda_1\mathbf{S}_1\mathbf{S}_1^{\text{T}}&\pm\sqrt{-\alpha_{1,2}\beta_{1,2}}\mathbf{J}&\pm\sqrt{-\alpha_{1,3}\beta_{1,3}}\mathbf{J}\\
&\lambda_2\mathbf{S}_2\mathbf{S}_2^{\text{T}}&\mathbf{S}_2^{(b)}\mathbf{D}_{2,3}^\prime\mathbf{S}_2^{(b)\text{T}}\\
&&\lambda_3\mathbf{S}_3\mathbf{S}_3^{\text{T}}
\end{pmatrix}
\end{align}
where $\mathbf{S}_j=\mathbf{S}_j^{(b)}\mathbf{S}_j^{(a)}$ is the total local transformation.
The $\mathbf{S}_2^{(b)}$ matrix is set such that the $Q-Q$ correlation $\left\langle Q_2 Q_3\right\rangle=0$ in the matrix $\mathbf{S}_2^{(b)}\mathbf{D}_{2,3}^\prime\mathbf{S}_2^{(b)\text{T}}$. For that we write it as a generic 2-dimensional rotation
\begin{align}
\mathbf{S}_2^{(b)}=\left(
\begin{array}{cc}
\cos \theta & \sin \theta \\
-\sin \theta & \cos \theta \\
\end{array}
\right)
\end{align}
then the correlation matrix between modes 2 and 3 is

\begin{align}
\mathbf{S}_2^{(b)}\mathbf{D}_{2,3}^\prime\mathbf{S}_2^{(b)\text{T}}=\left(
\begin{array}{cc}
\alpha_{2,3}^\prime  \cos ^2\theta+\beta_{2,3}^\prime  \sin ^2\theta & (\beta_{2,3}^\prime -\alpha_{2,3}^\prime ) \sin \theta \cos \theta \\
(\beta_{2,3}^\prime -\alpha_{2,3}^\prime ) \sin \theta \cos \theta & \alpha_{2,3}^\prime  \sin ^2\theta+\beta_{2,3}^\prime  \cos ^2\theta \\
\end{array}
\right)
\end{align}.

In the equation above, is possible to eliminate the $Q-Q$ term (upper left term in the matrix) only if $\alpha_{2,3}^\prime$ and $\beta_{2,3}^\prime$ have opposite sign, that is, if Det$\{\mathbf{D}_{2,3}^\prime\}=\alpha_{2,3}^\prime\beta_{2,3}^\prime<0$. In this case we can chose $\cos \theta= \frac{\sqrt{-\beta_{2,3}}}{\sqrt{\alpha_{2,3}-\beta_{2,3}}},\sin \theta= \frac{\sqrt{\alpha_{2,3}}}{\sqrt{\alpha_{2,3}-\beta_{2,3}}}$ for $\beta_{2,3}<0$ or $\cos \theta= \frac{\sqrt{\beta_{2,3}}}{\sqrt{\beta_{2,3}-\alpha_{2,3}}},\sin \theta= \frac{\sqrt{-\alpha_{2,3}}}{\sqrt{\beta_{2,3}-\alpha_{2,3}}}$ for $\alpha_{2,3}<0$ and then

\begin{align}
\mathbf{S}_2^{(b)}\mathbf{D}_{2,3}^\prime\mathbf{S}_2^{(b)\text{T}}=\left(
\begin{array}{cc}
0 & \pm\sqrt{-\alpha _{2,3} \beta _{2,3}} \\
\pm\sqrt{-\alpha _{2,3} \beta _{2,3}} & \alpha _{2,3}+\beta _{2,3} \\
\end{array}
\right)
\end{align}
where the minus sign is correspond to $\alpha_{2,3}<0$ and the plus sing to $\beta_{2,3}<0$. I this way we have gotten rid of all of diagonal terms of $\mathbf{U}$. 

In this procedure we first use single mode squeezing operations to turn the correlation matrices $\sigma_{1,2}$ and $\sigma_{1,3}$ proportional to the Pauli matrix $\sigma_z$ and the using rotations to eliminate the $Q-Q$ correlations (the first step is useful because using (\ref{eqn:SigmaZComm}) we can transform a given correlation matrix $\sigma_{jk}, j\neq k$ to a matrix proportional to $\mathbf{J}$ by using, let say, $\mathbf{S}_j$ regardless of the value of $\mathbf{S}_k$).

This procedure is not enough for a general $N$ mode Gaussian states given that the number of correlation matrices $\sigma_{jk}, j\neq k$ are $N(N-1)/2$ and we only have $N$ GLUs to eliminate all the $Q-Q$ correlations, but for $N=3$ we have $N(N-1)/2=N$.

For $N>3$ only in the case that we have enough symmetries in the state such that one local operation can in fact eliminate simultaneously several $Q-Q$ correlations and the $n$ GLUs would be enough to deal with the $N(N-1)/2$ of diagonal correlation matrices.

\section{Detailed $\mathbf{U}$ Diagonalization Procedure}

\subsection{Negative-determinant correlation submatrices}
If we start by operating in the matrix $\sigma_{12}$ then the explicit form of $\mathbf{S}_1\left[\mathbf{S}_2 \right]$ is
\begin{align}
		\mathbf{S}_1\left[\mathbf{S}_2 \right]=\begin{pmatrix}
			\displaystyle\frac{r_2}{\delta _{12}} \left(c_{12} \cos\phi _2-d_{12} \sin \phi _2\right) & \displaystyle\frac{r_2}{\delta _{12}} \left(b_{12} \sin\phi _2-a_{12} \cos\phi _2\right) \\
			\displaystyle\frac{2 \delta _{12}}{r_2} \frac{a_{12} \cos\phi _2-b_{12} \sin\phi _2}{\epsilon_{12} \sin2\phi _2+\tau_{12} \cos2\phi _2+\rho _{12}} & 
			\displaystyle\frac{2 \delta _{12}}{r_2} \frac{c_{12} \cos\phi _2-d_{12} \sin\phi _2}{\epsilon_{12} \sin2\phi _2+\tau_{12} \cos2\phi _2+\rho _{12}} \\
		\end{pmatrix}
\end{align}
where $\delta_{jk}=\sqrt{-\text{Det}[\sigma_{jk}]}=\sqrt{b_{jk} c_{jk}-a_{jk} d_{jk}}$ and
\begin{align}
	\epsilon_{12}=-2 a_{12} b_{12}-2 c_{12} d_{12}\\
	\tau_{12}=a_{12}^2-b_{12}^2+c_{12}^2-d_{12}^2\\
	\rho_{12}=a_{12}^2+b_{12}^2+c_{12}^2+d_{12}^2.
\end{align}
Then we can move throughout the first block row of the $\mathbf\Sigma^{(mode)}$, \eq{CMModes}, and bring the correlations matrices $\sigma_{1j}$ $j=3,4,...,N$ to the desired form using the GLUs $\mathbf{S}_j$ $j=3,4,...,N$ from the right as in (\ref{eq:RightDiag}) with $\mathbf{M}=\mathbf{S}_1\sigma_{1j}$. Similar to the case above, and given that $\mathbf{S}_1$ is now a function of $\mathbf{S}_2$, we end up with matrices $\mathbf{S}_j$ as functions of  $\mathbf{S}_2$, $\mathbf{S}_j=\mathbf{S}_j\left[\mathbf{S}_2\right]$. Again, all the matrices $\sigma_{1j}$ will be in the desired form regardless of the value of $\mathbf{S}_2$. Explicitly, the matrices $\mathbf{S}_j[\mathbf{S}_2]$ are 
\begin{align}
		\mathbf{S}_j\left[\mathbf{S}_2\right]=
		\begin{pmatrix}
			\displaystyle\frac{r_2}{\delta _{12} \delta _{1j}} \left(\gamma _{1j}\sin \phi _2 +\eta _{1j}\cos \phi _2 \right)
			& \displaystyle \frac{r_2}{\delta _{12} \delta _{1j}} \left(\mu _{1j}\sin\phi _2 + \nu _{1j}\cos\phi _2\right) \\
			\displaystyle-\frac{\delta _{12} \delta _{1j}}{r_2 }\, \frac{\mu _{1j}\sin\phi _2 +\nu _{1j}\cos\phi _2}{\zeta _{1j}\sin2 \phi _2 +\kappa _{1j}+\xi _{1j}\cos2 \phi _2} 
			& \displaystyle\frac{\delta _{12} \delta _{1j}}{r_2} \frac{\gamma _{1j}\sin\phi _2 +\eta _{1j}\cos\phi _2 }{\zeta _{1j}\sin2 \phi _2 +\kappa _{1j}+\xi _{1j}\cos2 \phi _2 } 
		\end{pmatrix},
		\qquad \forall j=3,4,...,N,
	\end{align}
	with
	\begin{align}
		\kappa_{1j}=&\frac{1}{2} \left[\left(c_{12}^2+d_{12}^2\right) \left(a_{1j}^2+b_{1j}^2\right)-2 \left(a_{12} c_{12}+b_{12} d_{12}\right) \left(a_{1j} c_{1j}+b_{1j} d_{1j}\right)+\left(a_{12}^2+b_{12}^2\right) \left(c_{1j}^2+d_{1j}^2\right)\right]\\
		\zeta_{1j}=&-c_{12} d_{12} \left(a_{1j}^2+b_{1j}^2\right)+\left(a_{12} d_{12}+b_{12} c_{12}\right) \left(a_{1j} c_{1j}+b_{1j} d_{1j}\right)-a_{12} b_{12} \left(c_{1j}^2+d_{1j}^2\right)\\
		\xi _{1j}=&\frac{1}{2} \left[\left(c_{12}^2-d_{12}^2\right) \left(a_{1j}^2+b_{1j}^2\right)+2 \left(b_{12} d_{12}-a_{12} c_{12}\right) \left(a_{1j} c_{1j}+b_{1j} d_{1j}\right)+\left(a_{12}^2-b_{12}^2\right) \left(c_{1j}^2+d_{1j}^2\right)\right]
\end{align}
\begin{align}
	\gamma _{1j} =b_{12} d_{1j}-d_{12} b_{1j}\\
	\eta _{1j} =c_{12} b_{1j}-a_{12} d_{1j}\\
	\mu _{1j} =d_{12} a_{1j}-b_{12} c_{1j}\\
	\nu _{1j}=a_{12} c_{1j}-c_{12} a_{1j}.
\end{align}
Once this procedure is complete we have all elements $\mathbf{U}_{1j}=0$, $j=2,3,...,N$ so now all the edges connecting qumode 1 with the rest of the graph can only be real. So far, we have set $N-1$ of the GLUs, for the last one we just take any other of the remaining correlation matrices to the desired form, \eq{sigmajkUdiag}, let's say the matrix $\sigma_{23}$,
\begin{align}
	\sigma_{23}^\prime=\mathbf{S}_2\sigma_{23}\mathbf{S}_3\left[\mathbf{S}_2\right]^\text{T}
\end{align}
It's of course enough to set the upper left element equal to zero. These have the form
\begin{align}
	a_{23}^\prime=r_2\left(A \cos2\phi_2+B\sin2\phi_2+C\right)
\end{align}
and $a_{2,3}^\prime=0$ has real solutions (two) for $\phi_2$ iff $C^2\leq A^2+B^2$. Again, which solution is the correct one will be determined by testing the solutions on the remaining correlation submatrices. 

\subsection{Singular correlation submatrices}

If $\sigma_{j,k}=0$, this does not give us any information about the required GLUs $\mathbf{S}_j$ and $\mathbf{S}_k$ so we just move to the next nonzero submatrix. If  $\sigma_{j,k}\neq0$ then, defining $\mathbf{M}=\begin{psmallmatrix} a & b \\c & d \\\end{psmallmatrix}=\sigma_{jk}\mathbf{S}_{right}^{\text{T}}$ a general $2\times2$ matrix with null determinant. Then we can operate with a sympletic matrix to the left such that the upper-right component of the correlation submatrix will set to zero (Det$[\sigma_{j,k}]=0 $ also means that the off-diagonal terms of $\sigma_{j\neq k}$ are also zero), i.e.
\begin{align}	
	\mathbf{S}_\text{left}\mathbf{M}=
	\begin{pmatrix}
		0 & 0 \\
		\frac{a b+c d}{r \sqrt{b^2+d^2}} & \frac{\sqrt{b^2+d^2}}{r} \\
	\end{pmatrix},
	\label{eq:LeftDiagDet0}
\end{align}
where the upper-left component is automatically zero too due to the fact that $ \text{Det}[\mathbf{M}]=ad-b c=0$. To achieve this we set the rotation 
\begin{align}\label{eq:LeftDiagParDet0}
\cos (\phi_\text{left} )&= \frac{d}{\sqrt{b^2+d^2}}\\ 
\sin (\phi_\text{left} )&= \frac{b}{\sqrt{b^2+d^2}}
\end{align} and the squeezing parameter $r_{left}$ would remain to be determined later in the algorithm. If $\mathbf{S}_{right}$ was already defined in a previous step of the algorithm then the lower-left term of (\ref{eq:LeftDiagDet0}) must be also zero, otherwise that means that the state is not GLU-equivalent to a cluster state. In the case that $\mathbf{S}_{right}$ was not determined yet, we can use it to impose that $ab+cd=0$. This is easily done by setting $\phi_{right}$ such that
\begin{gather}\label{eq:RightDiagParDet0}
	\cos (2 \phi_{right} )= \frac{a_{jk}^2-b_{jk}^2+c_{jk}^2-d_{jk}^2}{N_{jk}},\\
	\sin (2 \phi_{right} )= \frac{-2 \left(a_{jk} b_{jk}+c_{jk} d_{jk}\right)}{N_{jk}}\\
	N_{jk}=\sqrt{4 \left(a_{jk} b_{jk}+c_{jk} d_{jk}\right)^2+\left(a_{jk}^2-b_{jk}^2+c_{jk}^2-d_{jk}^2\right)^2}
\end{gather} where as before, the squeezing parameter $r_{right}$ can be used in another stage of the algorithm as before. It is worth pointing out that, in this case, where $ \text{Det}[\sigma_{jk}]=0$, we need both $\phi_{left}$ and $\phi_{left}$ to take $\sigma_{jk}$ to the standard form, contrary to the case  Det$[\sigma_{j,k}]\neq 0 $ as we will see below. There are two different solutions to (\ref{eq:RightDiagParDet0}) (if a given $\phi$ is a solution, then $\phi+\pi$ is also a solution corresponding to a different rotation) but which one have to be picked will be determined by applying both solutions in following steps on the algorithm.

\end{document}